\documentclass[journal]{IEEEtran}
\usepackage{xcolor,soul,framed}
\usepackage{amsfonts}
\usepackage{textcomp}
\usepackage{ntheorem}
\colorlet{shadecolor}{yellow}
\usepackage[linesnumbered,ruled]{algorithm2e}
\usepackage[pdftex]{graphicx}
\usepackage[cmex10]{amsmath}
\usepackage{array}
\usepackage{eqparbox}
\usepackage{url}
\usepackage{enumerate}   
\usepackage{booktabs}    
\usepackage[figuresright]{rotating}
\usepackage{makecell}

\usepackage{amssymb}   
\usepackage{pifont}

\usepackage{multirow}   
\newtheorem{theorem}{Theorem}
\newtheorem{theorem1}{Theorem}

\newtheorem{assumption}{Assumption}
\newtheorem{definition}{Definition}
\newtheorem{remark}{Remark}
\newtheorem{corollary}{Corollary}
\newtheorem{proposition}{Proposition}
\newtheorem{proposition1}{Proposition}

\usepackage{booktabs}
\usepackage{colortbl}
\usepackage{xcolor}
\usepackage{makecell}
\usepackage{pifont}
\definecolor{oursblue}{RGB}{219, 232, 252}   
\newcommand{\yes}{\ding{51}}                  
\newcommand{\no}{\textcolor{gray}{N/A}}       
\newcommand{\app}{\textit{Approx.}}           

\begin{document}
\makeatletter
\def\@textbottom{\vskip \z@ \@plus 1pt}
\makeatother
\bstctlcite{IEEEexample:BSTcontrol}

\title{Privacy as Commodity: MFG-RegretNet for Large-Scale
       Privacy Trading in Federated Learning}

\author{Kangkang Sun,
        Jianhua Li,~\IEEEmembership{Senior Member,~IEEE,}
        Xiuzhen Chen,~\IEEEmembership{Member,~IEEE,}\\
        Weizhi Meng,~\IEEEmembership{Senior Member,~IEEE,}
        Minyi Guo,~\IEEEmembership{Fellow,~IEEE}
  \thanks{Kangkang Sun, Jianhua Li, Xiuzhen Chen and Minyi Guo are
          with the Shanghai Key Laboratory of Integrated Administration
          Technologies for Information Security, School of Computer Science,
          Shanghai Jiao Tong University, Shanghai 200240, China
          (e-mail: szpsunkk@sjtu.edu.cn; lijh888@sjtu.edu.cn;
          xzchen@sjtu.edu.cn; guo-my@cs.sjtu.edu.cn).}
  \thanks{Weizhi Meng is with the School of Computing and Communications, Lancaster University, Lancaster LA1 4YW, United Kingdom, Email: weizhi.meng@ieee.org.}
  \thanks{\textit{Corresponding author: Jianhua Li}.}
  \thanks{This work has been submitted to the IEEE for possible publication. Copyright may be transferred without notice, after which this version may no longer be accessible.}
}

\maketitle

\begin{abstract}
Federated Learning (FL) has emerged as a prominent paradigm for
privacy-preserving distributed machine learning, yet two fundamental
challenges hinder its large-scale adoption.
First, gradient inversion attacks can reconstruct sensitive training
data from uploaded model updates, so privacy risk persists even when
raw data remain local.
Second, without adequate monetary compensation, rational clients have
little incentive to contribute high-quality gradients, limiting
participation at scale.
To address these challenges, a \textit{privacy trading
market} is developed in which clients sell their differential privacy
budgets as a commodity and receive explicit economic
compensation for privacy sacrifice.
This market is formalized as a \textit{Privacy Auction Game} (PAG),
and the existence of a Bayesian Nash Equilibrium is established under
dominant-strategy incentive compatibility (DSIC), individual
rationality (IR), and budget feasibility.
To overcome the NP-hard, high-dimensional Nash Equilibrium
computation at scale, \textit{MFG-RegretNet} is introduced as a
deep-learning-based auction mechanism that combines mean-field game
(MFG) approximation with differentiable mechanism design.
The MFG reduction lowers per-round computational complexity from
$\mathcal{O}(N^2 \log N)$ to $\mathcal{O}(N)$ while incurring only
an $\mathcal{O}(N^{-1/2})$ equilibrium approximation gap.
Extensive experiments on MNIST and CIFAR-10 demonstrate that
MFG-RegretNet outperforms state-of-the-art baselines in incentive
compatibility, auction revenue, and social welfare, while maintaining
competitive downstream FL model accuracy.
\end{abstract}

\begin{IEEEkeywords}
Federated Learning, Privacy Auction, Privacy Trading,
Mean-Field Game, Differential Privacy
\end{IEEEkeywords}

\IEEEpeerreviewmaketitle

\section{Introduction}
\label{sec:intro}

\IEEEPARstart{F}{ederated} Learning (FL)~\cite{mcmahan} has emerged as
the predominant paradigm for privacy-aware distributed machine
learning, keeping raw data on local devices while sharing only model
gradients with a central server.
Yet a fundamental tension persists: \emph{service providers require
high-quality data to build accurate models, while users are
increasingly reluctant to disclose sensitive information without
adequate compensation or privacy guarantees.}
This tension is exacerbated by two structural deficiencies that
undermine practical FL adoption at scale.

\textbf{First, FL does not eliminate privacy risk.} Model gradients
themselves constitute a serious privacy vulnerability: gradient
inversion attacks~\cite{yang2023gradient} can reconstruct original
training samples from uploaded gradients with high fidelity.
Differentially private FL (DP-FL)~\cite{geyer2017differentially}
mitigates this risk by injecting calibrated noise into gradients before
upload, but this introduces a new dilemma: different clients have
heterogeneous privacy \emph{valuations}: some are willing to upload
relatively clean gradients ($\epsilon$ large) for higher compensation,
while others require strong protection ($\epsilon$ small) and therefore
contribute noisier, less useful gradients. A one-size-fits-all privacy
policy fails to respect this heterogeneity, resulting in either privacy
harm or avoidable accuracy loss.

\textbf{Second, FL lacks effective incentive mechanisms for
privacy-sensitive participants.} In the absence of monetary
compensation, rational users have little motivation to expend local
computation, communication bandwidth, and crucially privacy
budget. Existing Federated Learning Market (FLM) approaches trade
raw datasets, trained model weights, or FL training
capacity~\cite{liMartFLEnablingUtilityDriven2023,
zheng2022fl,
maiAutomaticDoubleAuctionMechanism2022a}, but
\textbf{none treats privacy itself as the traded commodity.}
  The core insight is that the differential privacy budget
  $\epsilon_i$, rather than data, models, or computation, which
  should be the \emph{primary object of trade}, with users receiving
  compensation proportional to the privacy they sacrifice.

\textbf{Three Key Technical Challenges.}
Realising this vision raises three fundamental challenges that prior
work has not jointly addressed.

\textbf{Challenge~1: Privacy Market Design.}
How should the privacy trading market be formally structured? A
rigorous market design must specify: (i) how users express their
heterogeneous privacy valuations; (ii) how a data buyer (e.g., an FL
server) specifies its accuracy requirements and budget; and (iii) how
a trusted broker/auctioneer allocates privacy budgets and determines
payments. Critically, the mechanism must satisfy
\emph{Dominant-Strategy Incentive Compatibility} (DSIC), so that
truthful reporting of privacy valuations is each user's optimal
strategy, and \emph{Individual Rationality} (IR), so that
participation is always at least as good as abstaining. Without these
guarantees, rational users will strategically misreport valuations,
corrupting both market efficiency and FL model quality.

\textbf{Challenge~2: Scalability of Auction Equilibrium.}
Even with a well-defined mechanism, computing the Nash Equilibrium
(NE) of a privacy auction game among $N$ participants is intrinsically
difficult. Each user's optimal bid depends on the strategies of all
other $N-1$ participants, so the NE problem has dimension
$\mathcal{O}(N^{2})$ and is NP-hard in general. For the large-scale
FL settings of interest ($N$ in the hundreds to thousands), exact NE
computation is computationally intractable. \emph{How can
approximate the auction equilibrium efficiently without losing
theoretical guarantees?}

\textbf{Challenge~3: Near-Optimal Mechanism Design under Unknown
Valuation Distributions.}
Challenge~2 reduces the equilibrium problem to $\mathcal{O}(N)$, but
does \emph{not} specify how to find a revenue-maximising allocation
and payment rule. Classical approaches (e.g., Myerson's optimal
auction theory~\cite{myerson1981optimal}) yield closed-form solutions
only under restrictive distributional assumptions. In FL, privacy
valuations are heterogeneous, correlated across rounds, and drawn from
an \emph{unknown} distribution. \emph{How can a mechanism be learned
that is approximately revenue-optimal and DSIC-compliant, without
assuming any closed-form prior over valuations?}

\textbf{Approach and Contributions.}
The three challenges above are addressed through a unified framework
combining a novel \emph{privacy trading market}, a
\emph{mean-field game (MFG) approximation}, and a
\emph{deep learning-based automatic auction solver}, termed
\textbf{MFG-RegretNet}. Inspired by the foundational privacy auction
work~\cite{ghosh2011selling} and
recent progress in differentiable mechanism
design~\cite{dutting2024Optimal}, this framework makes the following
specific contributions.

\begin{itemize}
  \item \textbf{\textit{Privacy Trading Market (PTM).}}
        A formal privacy trading market for FL is established, in which
        each FL client treats its personalised differential privacy
        budget $\epsilon_i$ as a tradeable commodity. Clients bid for
        compensation commensurate with their privacy cost
        $c(v_i,\epsilon_i)=v_i\cdot\epsilon_i$, and the FL server
        purchases privacy budgets subject to a total expenditure
        constraint $B$. To the best of current knowledge, this is the \emph{first}
        work to formalise privacy itself, rather than data, models,
        or computation, as the primary unit of trade in a federated
        learning market.

  \item \textbf{\textit{Privacy Auction Game (PAG) with Theoretical
        Guarantees.}}
        The interaction among clients and the server is modeled as a
        \emph{Privacy Auction Game} (PAG) and design a procurement
        mechanism satisfying DSIC, IR, Dominant-Strategy Truthfulness
        (DT), and Budget Feasibility (BF). It is proven
        (Theorem~\ref{thm:pag-equilibrium}) that under these
        properties, truthful reporting of privacy valuations
        constitutes a Bayesian-Nash Equilibrium of the PAG, providing
        a rigorous incentive foundation for the proposed market.

  \item \textbf{\textit{Mean-Field Game Approximation.}}
        To overcome the $\mathcal{O}(N^2)$ scalability barrier, the
        population of FL clients is modeled as a
        \emph{mean-field flow}
        $\mu_t\in\mathcal{P}(\mathcal{V}\times\mathcal{E})$ and
        reduce the multi-agent Nash problem to a coupled
        Hamilton-Jacobi-Bellman / Fokker-Planck system. It is shown
        (Theorem~\ref{thm:mfg_approx}) that the resulting
        Mean-Field Equilibrium $\varepsilon$-approximates the true
        Nash Equilibrium with error $\mathcal{O}(1/\sqrt{N})$,
        enabling tractable computation in large-scale FL deployments
        while retaining rigorous approximation guarantees.

  \item \textbf{\textit{MFG-RegretNet: Deep Learning Auction Solver.}}
        MFG-RegretNet is proposed as a differentiable mechanism that
        addresses Challenge~3 by parameterising allocation and payment
        rules as neural networks conditioned on the scalar mean-field
        statistic $b_{\mathrm{MFG}}$ (addressing Challenge~2).
        Rather than treating $b_{\mathrm{MFG}}$ merely as a context
        feature, a dedicated \emph{MFG alignment regulariser} forces
        learned payments to agree with the mean-field payment
        functional, so that MFG equilibrium structure enters the
        training gradient. Approximate DSIC is enforced by minimising
        ex-post regret via augmented Lagrangian optimisation, without
        any closed-form prior over privacy valuations.
\end{itemize}


The remainder of this paper is organised as follows.
Section~\ref{sec:related} and Section~\ref{sec:prelim} present related work and
background on differential privacy and federated learning, respectively.
Section~\ref{sec:market} and Section~\ref{sec:pac-fl} present the privacy
trading market and auction game, and the PAG-FL framework, respectively.
Section~\ref{sec:mfg} and Section~\ref{sec:regretnet} present the MFG
approximation and MFG-RegretNet, respectively.
Section~\ref{sec:exp} and Section~\ref{sec:conclusion} present experimental
results and concluding remarks, respectively.

\section{Related Work}
\label{sec:related}

\begin{table*}[t]
\centering
\caption{Comparison of Representative Related Works Across Key Design Dimensions}
\label{tab:related}
\resizebox{\textwidth}{!}{
\begin{tabular*}{\textwidth}{@{\extracolsep{\fill}}
  l                              
  c c c c c c                    
@{}}
\toprule
\multirow{2}{*}{\textbf{Work}}
  & \textbf{FL}        & \textbf{Privacy as}  & \textbf{Auction}
  & \textbf{MFG}       & \textbf{Deep}        & \textbf{DSIC}       \\
  & \textbf{Setting}   & \textbf{Commodity}   & \textbf{Mechanism}
  & \textbf{Scalability} & \textbf{Learning}  & \textbf{Guarantee}  \\
\midrule
Ghosh \& Roth~\cite{ghosh2011selling}
  & \no & \yes & \yes & \no & \no & \yes \\
Fleischer \& Lyu~\cite{fleischer2012approximately}
  & \no & \yes & \yes & \no & \no & \app \\
Yang et al.~\cite{yang2024CSRA}
  & \yes & \no & \yes & \no & \no & \yes \\
Zheng et al.~\cite{zheng2022fl}
  & \yes & \no & \yes & \no & \yes & \app \\
Sun et al.~\cite{sun2024reputation}
  & \yes & \no & \no & \yes & \no & \no \\
Hu et al.~\cite{hu2025federated}
  & \yes & \no & \no & \yes & \no & \no \\
D\"{u}tting et al.~\cite{dutting2024Optimal}
  & \no & \no & \yes & \no & \yes & \app \\
\midrule
\rowcolor{oursblue}
\textbf{MFG-RegretNet (Ours)}
  & \yes & \yes & \yes & \yes & \yes & \app \\
\bottomrule
\end{tabular*}}
\vspace{2pt}
{\centering \footnotesize\raggedright
  \yes~$=$ supported;\quad
  \textcolor{gray}{N/A}~$=$ not supported;\quad
  \textit{Approx.}~$=$ approximately satisfied via augmented Lagrangian training.
}
\end{table*}

This study lies at the intersection of four research streams,
summarised in Table~\ref{tab:related}, with extended discussion in Appendix~\ref{app:related}\footnote{Appendix is available at \url{https://github.com/szpsunkk/MFG-RegretNet}}.

\textbf{Privacy-preserving FL.}
Gradient inversion attacks~\cite{yang2023gradient} demonstrate that
an honest-but-curious server can reconstruct raw samples from shared gradients.
The dominant defence is Differential Privacy: DP-SGD and its federated
variant~\cite{geyer2017differentially} inject calibrated Gaussian noise into
gradients before upload, offering $(\epsilon,\delta)$-DP at the cost of
model utility.
Secure aggregation~\cite{chi2023trusted} uses cryptographic masking so the
server observes only the aggregate.
A critical gap persists: all these approaches treat $\epsilon$ as a fixed
system parameter, never as a user-controlled commodity tradeable for compensation.

\textbf{FL incentive mechanisms.}
Incentive design for FL has been studied via Stackelberg
games~\cite{yi2022Stackelberg,chenMultifactorIncentiveMechanism2023},
MFG-based reputation auctions~\cite{sun2024reputation}, matching
games~\cite{singhStable2024}, and DRL-based pricing~\cite{zhan2020learning}.
Auction-based FL (AFL)~\cite{tang2024survey} has emerged as a prominent
sub-field, with CSRA~\cite{yang2024CSRA} providing robust reverse auctions
for DP-FL with truthfulness and budget-feasibility guarantees.
However, all existing AFL works treat privacy as a cost constraint, trading
data, models, or compute capacity rather than the privacy budget itself.

\textbf{Privacy auctions and data markets.}
Ghosh and Roth~\cite{ghosh2011selling} laid the theoretical foundation for
privacy-as-commodity by formulating a reverse auction that elicits individual
privacy valuations.
Fleischer and Lyu~\cite{fleischer2012approximately} extended this to correlated
privacy costs; Ghosh et al.~\cite{ghosh2014buying} addressed unverifiable
valuations.
More recently, Hu et al.~\cite{hu2025federated} proposed MFG-based data pricing
for federated markets; Wu et al.~\cite{wu2026flgcpa} treated privacy as a
truthfully reported resource in FL, which is the closest prior art to our PAG framing.
The key limitation of all prior work is a focus on \emph{static, single-round,
centralised} settings; none addresses the multi-round distributed FL context.

\textbf{Differentiable mechanism design.}
Myerson~\cite{myerson1981optimal} gives revenue-optimal auctions under
restrictive distributional assumptions inapplicable to heterogeneous FL clients.
RegretNet~\cite{dutting2024Optimal} parameterises allocation and payment rules
as neural networks trained to minimise ex-post regret, circumventing distributional
priors; DM-RegretNet~\cite{zheng2022fl} applied this to FL model trading.
Both inherit an $\mathcal{O}(N^2)$ pairwise dependency that makes them
computationally intractable at large $N$.
MFG-RegretNet replaces pairwise coupling with a scalar mean-field statistic,
reducing complexity to $\mathcal{O}(N)$ while operating on privacy budgets
as the traded commodity.

\textbf{Analytical DP-FL selection and pricing.}
Recent analytical approaches address heterogeneous-$\epsilon$ selection
and compensation directly.
Privacy-aware client selection~\cite{li2022privacy} and heterogeneous
DP aggregation~\cite{liu2022gdpfed} optimise selection probabilities
under non-uniform noise variance, but provide no economic mechanism or
incentive-compatibility guarantees.
Jointly-optimal selection-and-allocation mechanisms (JSAM-style,
e.g.,~\cite{ding2021differentially}) derive structural results
(threshold selection, $\epsilon^*_k \propto p_k^{2/3}v_k^{-1/3}$)
under Bayesian priors, but do not scale to unknown distributions or
large $N$.
Our key distinctions are: (i)~\emph{distribution-free} learned
mechanism via regret minimisation; (ii)~\emph{MFG-aware regret}
that accounts for population-level strategic effects;
and (iii)~a formal $\mathcal{O}(N^{-1/2})$ MFE-to-Nash
approximation bound enabling scalable deployment.

\section{Preliminaries}
\label{sec:prelim}

\subsection{Differential Privacy}

We adopt \emph{Personalised Local Differential Privacy}
(PLDP)~\cite{jorgensen2015conservative}, which generalises standard
LDP~\cite{dwork2014algorithmic} to allow each client to hold a
distinct privacy budget.

\begin{definition}[PLDP~\cite{jorgensen2015conservative}]
\label{de:PLDP}
For client $i\in\mathcal{N}$ with $\epsilon_i>0$ and $\delta_i\geq 0$,
a randomised mechanism $\mathcal{M}^i$ is
$(\epsilon_i,\delta_i)$-PLDP if for all adjacent inputs
$\boldsymbol{x},\boldsymbol{x}'\in\mathcal{X}^i$ and all outputs
$\boldsymbol{y}\in\mathcal{Y}^i$:
\begin{equation}
  \Pr[\mathcal{M}^i(\boldsymbol{x})=\boldsymbol{y}]
  \leq e^{\epsilon_i}
  \Pr[\mathcal{M}^i(\boldsymbol{x}')=\boldsymbol{y}] + \delta_i.
\end{equation}
Smaller $\epsilon_i$ implies stronger privacy at the cost of lower
data utility.
\end{definition}

In practice, private gradients are perturbed via the
Gaussian Mechanism~\cite{bun2016concentrated}: setting noise scale
$\sigma_i \geq \tfrac{2\Delta f\sqrt{\ln(1.25/\delta_i)}}{\epsilon_i}$
makes the perturbed gradient $(\epsilon_i,\delta_i)$-DP,
where $\Delta f$ is the $\ell_2$-sensitivity of the gradient function.
This relation ties the allocated budget $\epsilon_i^{\mathrm{out}}$
directly to the noise injected into each client's gradient
(see Algorithm~\ref{alg:mfg-fpa-online}).

\subsection{Federated Learning}

In standard FL, a server $S$ coordinates $N$ clients, each holding
a private dataset $\mathcal{D}_i$.
Client $i$ minimises its local loss
$\mathcal{L}_i(\mathbf{w},\mathcal{D}_i)$ to obtain local model
$\mathbf{w}_i$; the server aggregates:
\begin{equation}
\label{eq:global-gradient}
  \mathbf{w}_G = \sum_{i=1}^{N}\alpha_i\mathbf{w}_i,
  \quad \sum_{i=1}^{N}\alpha_i = 1,
\end{equation}
targeting $\mathbf{w}_G^* = \arg\min_{\mathbf{w}}\sum_i \alpha_i \mathcal{L}_i(\mathbf{w}_i)$.
Under DP-FL, each client perturbs its gradient with Gaussian noise
calibrated to $\epsilon_i^{\mathrm{out}}$ before upload, so the
server can only observe the noisy aggregate.
The $\ell_2$-sensitivity of client $i$'s gradient is:
\begin{equation}
  \Delta f_{\mathbf{w}_i}
  = \max_{\mathcal{D}_i,\mathcal{D}_i^{\prime}}
    \left\|\nabla\mathcal{L}_i(\mathbf{w},\mathcal{D}_i)
           -\nabla\mathcal{L}_i(\mathbf{w},\mathcal{D}_i')\right\|_2,
\end{equation}
where $\mathcal{D}_i'$ differs from $\mathcal{D}_i$ in exactly one
sample.
The allocated budget $\epsilon_i^{\mathrm{out}}$ thus determines
the noise scale $\sigma_i$ and directly affects gradient quality
and downstream model accuracy, making it the natural economic object
to trade in the privacy market.

\section{Privacy Trading Market and Privacy Auction Game}
\label{sec:market}

In this section, we introduce the \emph{privacy trading market},
where privacy as a commodity trades among data owners (DOs) and data
buyers (DBs). We then design a \emph{privacy auction game} to
incentivise participants to share data.

\begin{figure}[t]
  \centering
  \includegraphics[width=0.48\textwidth]{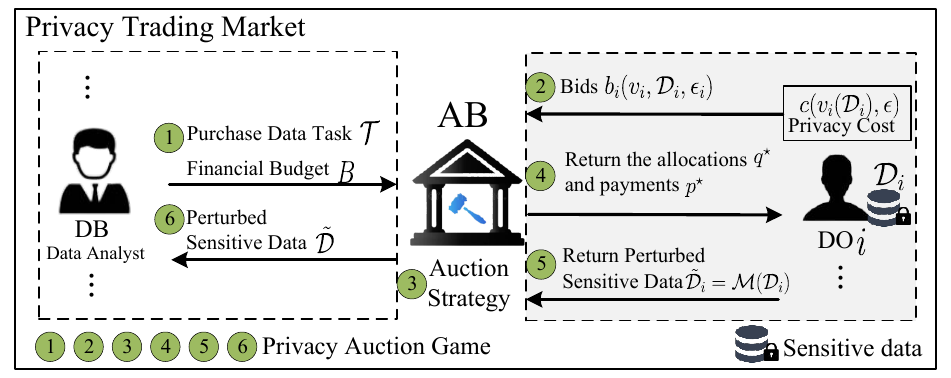}
  \caption{Privacy Trading Market: the six-step auction process
           among DOs, the Auction Broker (AB), and DBs.}
  \label{fig:privacy-market}
\end{figure}

\subsection{Privacy Trading Market}
\label{subsec:ptm}

The privacy trading market comprises three roles:
\textit{Data Owners} (DOs), an \textit{Auction Broker} (AB), and
\textit{Data Buyers} (DBs).
As illustrated in Fig.~\ref{fig:privacy-market}, the auction proceeds
in six steps.
DOs $\mathcal{N}=\{1,\ldots,n\}$ hold local sensitive datasets
$\mathcal{D}=\{\mathcal{D}_1,\ldots,\mathcal{D}_n\}$.
We assume that DOs are \textit{single-minded}: a DO will provide
raw sensitive data only if the privacy preservation it requests is
satisfied~\cite{zhang2020Selling}.
DO $i$ protects her sensitive data through a mechanism
$\mathcal{M}_{\epsilon_i}$ satisfying $(\epsilon_i,\delta_i)$-PLDP.
DBs purchase sensitive data from DOs to complete analysis or model
training tasks. The AB is a credible auctioneer that performs multiple
rounds of auctions. Since a DO may misreport its privacy valuation
$v(\mathcal{D})$, the mechanism must incentivise truthful reporting.

We assume the AB is fully trusted and a reverse auction mechanism is
used~\cite{ghosh2011selling,dandekar2014privacy}.

\begin{assumption}
\label{ass:market}
All buyers agree to use privacy as a common evaluation metric,
making the auction a fair game. Buyers have strong beliefs in the
true value of their data; $v$ represents the user's estimation of
data privacy during the auction. Buyers do not collude with each
other during the auction process.
\end{assumption}

\begin{remark}[Threat Model and Trust Assumptions]
\label{rem:threat}
Our framework operates under the following threat model.
\textbf{(1) Trusted Auction Broker (AB).}
We assume the AB is a semi-trusted, non-colluding third party that
faithfully executes the auction mechanism and does not leak bid
information to other clients.
This is consistent with existing privacy auction
literature~\cite{ghosh2011selling,dandekar2014privacy} and is a
standard assumption in mechanism design.
In practice, the AB can be instantiated as a secure multi-party
computation (SMPC) node or a trusted execution environment (TEE),
as explored in adjacent works.
\textbf{(2) No Client Collusion.}
Clients are assumed to act independently (Assumption~\ref{ass:market}).
Collusion among a subset of clients could allow coordinated
bid manipulation to inflate payments; analysing collusion-resistant
mechanisms is an important direction for future work.
\textbf{(3) Relationship to Secure Aggregation.}
Our mechanism operates at the \emph{auction layer} (determining
which privacy budget each client uses), independently of the
gradient aggregation layer.
It is fully compatible with secure aggregation
protocols~\cite{chi2023trusted} that mask individual gradients
from the server: the AB only observes reported bids $(v_i, \epsilon_i)$,
not raw gradients.
Combining our market mechanism with secure aggregation thus provides
both economic incentives (this work) and cryptographic privacy
guarantees (orthogonal protocol layer).
\textbf{(4) Disclosure of $\epsilon_i$.}
Clients report their declared privacy budget $\epsilon_i$ as part
of the bid.
Under DSIC, truthful reporting is a dominant strategy, so the
disclosed $\epsilon_i$ does not create additional privacy leakage
beyond what is inherent in the auction participation itself.
\end{remark}

\subsection{Privacy Auction Game (PAG)}
\label{subsec:pag}

To incentivise privacy trading, we design a privacy auction game
(Fig.~\ref{fig:privacy-market}).
Each DO $i$ has a valuation function $v_i(\mathcal{D}_i;b_i)$, where
bidding profile $b_i$ is given at privacy budget
$(\epsilon_j,\delta_j)$-PLDP.
DOs report their valuations (possibly untruthfully) and the AB decides
an allocation and payment $p_1,\ldots,p_n$.

For DO $i$, the cost of having data used at privacy level $\epsilon$
is $c(v_i,\epsilon)$, linear in both $v_i$ and $\epsilon$:
$c(v_i,\epsilon)\leq c(v_j,\epsilon)$ if and only if $v_i\leq v_j$.
Specifically: $c_i(\epsilon)=\epsilon\cdot v_i$~\cite{fleischer2012approximately}.

The utility of DO $i$ is:
\begin{equation}
  u_i = p_i - c(v_i,\epsilon_i).
\end{equation}
Summing client utilities defines \emph{participant social welfare}
(client-side economic surplus): monetary compensation net of privacy
disutility. In one auction round,
\begin{equation}
\label{eq:sw-def}
  \mathrm{SW}
  = \sum_{i=1}^{n} u_i
  = \sum_{i=1}^{n}\bigl(p_i - v_i\epsilon_i^{\mathrm{out}}\bigr),
\end{equation}
where $\epsilon_i^{\mathrm{out}}$ is the privacy level actually applied
to client $i$ after allocation (in FL, the noise scale of the uploaded
gradient). Under IR, $p_i\ge c(v_i,\epsilon_i^{\mathrm{out}})$ and each
summand in~Eq.~(\ref{eq:sw-def}) is non-negative. Payments are transfers from
the server's budget $B$, so $\mathrm{SW}$ does not include the server's
utility from model quality; the latter is evaluated via downstream
accuracy (RQ4, Sec.~\ref{sec:exp}). In experiments we evaluate
Eq.~(\ref{eq:sw-def}) using each client's \emph{true} $v_i$ when forming
$c(\cdot)$, so that welfare aligns with realised privacy cost and is
comparable across mechanisms. Across $T_{\mathrm{FL}}$ federated rounds
we report the time average
$\overline{\mathrm{SW}}=\frac{1}{T_{\mathrm{FL}}}\sum_{t=1}^{T_{\mathrm{FL}}}
\mathrm{SW}^{(t)}$, where $(p_i^{(t)},\epsilon_i^{(t),\mathrm{out}})$ denote
round-$t$ outcomes. We write $\mathrm{SW}$ to distinguish welfare from
the type distribution denoted $\mathcal{W}$ in
Theorem~\ref{thm:pag-equilibrium}.

Define the joint valuation domain
$\mathcal{V}=\mathcal{V}_1\times\cdots\times\mathcal{V}_n$, where
$\mathcal{V}_i=\{v_{i1},\ldots,v_{im}\}$ ($m$ = number of items).
Let $\mathcal{V}_{-i}$ and $b_{-i}$ denote the joint domain and bid
vector excluding agent $i$, respectively.

\subsection{Privacy Auction Equilibrium}
\label{subsec:pae}

All bidders aim to maximise $u_i(v_i,b)$. To obtain a
\textit{Nash Equilibrium}, we design a \textit{privacy procurement
mechanism}.

\begin{definition}[Privacy Procurement Mechanism]
\label{de:mechanism}
A privacy procurement mechanism $\mathcal{M}=(q,p)$ consists of:
\begin{itemize}
  \item an allocation function
        $q:\mathcal{V}\rightarrow\{0,1\}^n$;
  \item a payment function
        $p:\mathcal{V}\rightarrow\mathbb{R}^n$.
\end{itemize}
\end{definition}

The mechanism $\mathcal{M}$ must satisfy the following properties:

\begin{itemize}
  \item \textbf{Dominant-Strategy Incentive Compatibility (DSIC):}
        \label{prop:DSIC}
        For each bidder $i\in\mathcal{N}$ and
        $\forall b_{-i}\in\mathcal{V}_{-i}$, truthful reporting of
        valuation $v_i$ is a dominant strategy regardless of others'
        reports:
        \begin{equation}
          u_i\!\left(\cdot|(v_i,b_{-i})\right)
          \geq u_i\!\left(\cdot|(b_i,b_{-i})\right),
          \quad\forall b_i\in\mathcal{V}_i.
        \end{equation}

  \item \textbf{Individual Rationality (IR):}
        \label{prop:IR}
        For each bidder $i\in\mathcal{N}$, participation yields
        non-negative utility:
        \begin{equation}
          p_i(v) \geq c(v_i,\epsilon_i).
        \end{equation}

  \item \textbf{Dominant-Strategy Truthfulness (DT):}
        \label{prop:DT}
        DSIC guarantees truthful reporting of valuations $v_i$.
        DT further requires that each bidder also truthfully reports
        her \emph{privacy budget} $\epsilon_i$, i.e., for any fake
        reported budget $\epsilon'_i \neq \epsilon_i$:
        \begin{equation}
          u_i(v_i,\epsilon_i,b_i;b_{-i})
          \geq u_i(v_i,\epsilon_i,b'_i;b_{-i}).
        \end{equation}
        This is necessary because the cost $c(v_i,\epsilon_i)=v_i\cdot\epsilon_i$
        depends on the reported $\epsilon_i$; a client could
        misreport $\epsilon_i$ (in addition to $v_i$) to
        manipulate its cost and payment.

  \item \textbf{Budget Feasibility (BF):}
        \label{prop:BF}
        \begin{equation}
        \label{eq:BF}
          \sum_{i=1}^n p_i(v_i)\leq B.
        \end{equation}
\end{itemize}

Under DSIC, IR, and DT, the expected revenue given joint distribution
$\mathcal{W}$ of $(v,\epsilon,\delta)$ is:
\begin{equation}
  \mathcal{R}
  := \mathbb{E}_{(v,\epsilon)\sim\mathcal{D}_{v,\epsilon}}
     \!\left[\sum_{i=1}^n p_i(v,\epsilon)\right].
\end{equation}

The optimal privacy auction problem is:
\begin{equation}
\label{eq:P1}
  \mathcal{P}_1:\quad
  \max_{v,\epsilon}\sum_{i=1}^{N}\mathcal{R}_i
  \quad\text{s.t.: DSIC, IR, DT, BF.}
\end{equation}

\begin{definition}[Privacy Auction Equilibrium]
\label{de:PAE}
A strategy profile $q_1,\ldots,q_n$ is a \textit{Bayesian-Nash
Equilibrium} (BNE) if, for all $i\in\mathcal{N}$ and
$v_i\in\mathcal{V}$:
\begin{equation}
  \mathbb{E}[u_i(q,v_i)] \geq \mathbb{E}[u_i(q_{-i},v_i)].
\end{equation}
\end{definition}

\begin{theorem}[Privacy Auction Game Equilibrium]
\label{thm:pag-equilibrium}
In a PAG where privacy valuations $v_i$ of DOs are drawn from joint
distribution $\mathcal{W}$, the cost function is linear
$c(v_i,\epsilon_i)=\epsilon_i\cdot v_i$, and the mechanism
$\mathcal{M}=(q,p)$ satisfies DSIC, IR, DT, and BF, there exists a
Bayesian-Nash Equilibrium in which all rational bidders truthfully
report both their privacy valuations $v_i$ \emph{and} their privacy
budgets $\epsilon_i$:
\begin{equation}
\label{eq:bne}
  \forall i\in\mathcal{N},\forall v_i\in\mathcal{V}_i,\quad
  \mathbb{E}[u_i(q,v_i)] \geq \mathbb{E}[u_i(q_{-i},v_i)].
\end{equation}
\end{theorem}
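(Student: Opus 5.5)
The plan is to show that the truthful profile $\sigma^\ast$, in which every DO $i$ reports $b_i=(v_i,\epsilon_i)$, is a Bayesian-Nash Equilibrium. The tool is the standard fact that a dominant-strategy equilibrium is in particular a Bayesian-Nash Equilibrium for every prior. I will fix an arbitrary DO $i$, her true type $(v_i,\epsilon_i)$, and an arbitrary deviation $b_i'=(v_i',\epsilon_i')$. The goal is to prove the pointwise inequality
\begin{equation*}
u_i\bigl(v_i,\epsilon_i;(v_i,\epsilon_i),b_{-i}\bigr)\;\ge\;u_i\bigl(v_i,\epsilon_i;(v_i',\epsilon_i'),b_{-i}\bigr)
\end{equation*}
for every realisation $b_{-i}$ of the others' reports. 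Taking the conditional expectation over $b_{-i}$, with respect to the distribution that $\mathcal{W}$ induces given $(v_i,\epsilon_i)$, then yields Eq.~(\ref{eq:bne}). Monotonicity of expectation makes this last step immediate, and it is the step where the prior $\mathcal{W}$ enters. Because the inequality holds pointwise, it does not matter whether $\mathcal{W}$ is a product distribution or a correlated one.

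To obtain the pointwise inequality, I will split the deviation into two parts. First I compare $(v_i',\epsilon_i')$ with $(v_i,\epsilon_i')$, which corrects the budget misreport. DT, applied with the valuation held fixed, says that reporting the true $\epsilon_i$ weakly dominates any $\epsilon_i'$. I then compare $(v_i,\epsilon_i')$ with $(v_i,\epsilon_i)$, which corrects the valuation misreport. DSIC says that truthful $v_i$ dominates any $b_i\in\mathcal{V}_i$ whatever $b_{-i}$ is. Chaining the two inequalities gives the joint statement.

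The main obstacle is that DSIC and DT, as the paper states them, each control only one coordinate of the report. Chaining requires DT to hold when the valuation report is arbitrary, not only when it is truthful. To handle this I will read the DT inequality as quantified over every $b_i'$ in the joint report space $\mathcal{V}_i\times\mathcal{E}$. Its statement, which compares $u_i(v_i,\epsilon_i,b_i;b_{-i})$ with $u_i(v_i,\epsilon_i,b'_i;b_{-i})$ for any fake report, supports that reading. Under it the joint dominance follows in a single step, and the decomposition above serves only as intuition. Linearity of $c(v_i,\epsilon_i)=v_i\epsilon_i$ is used to check that the utility $u_i=p_i-v_i\epsilon_i^{\mathrm{out}}$ is evaluated at the true cost in both profiles. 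The misreport changes only $p_i$ and $\epsilon_i^{\mathrm{out}}$, so the comparison reduces exactly to the DSIC/DT inequalities.

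Two further points complete the argument. IR guarantees that truthful participation gives $u_i\ge 0$, so the truthful strategy also beats abstaining, which corresponds to the $q_{-i}$ comparison in Definition~\ref{de:PAE}. BF guarantees that the payment rule is feasible for every report profile, so the mechanism, and hence the game and its equilibrium, is well defined. Existence then follows because $\sigma^\ast$ is an explicit profile at which no DO can profitably deviate in expectation.
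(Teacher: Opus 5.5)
Your argument is correct. Its decisive step is the same as the last step of the paper's proof sketch: truthful reporting is optimal for every fixed $b_{-i}$, so it stays optimal after averaging over the others' types, which gives the BNE inequality. What differs is everything around that step.

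\textbf{The paper's route.} It instantiates the mechanism as the PAC threshold rule (Algorithm~\ref{alg:pac}). Most of its proof re-derives DSIC and IR through a winner/loser case analysis ($i\le k$ versus $i>k$). That is where the linear, hence monotone-in-$v$, cost is actually used. It considers only misreports of $v_i$, leaving $\epsilon_i$ to a one-line remark. It then appeals to Nash's theorem for finite $N$, and to MFG fixed-point results for large $N$, to assert existence. What it buys is a concrete mechanism showing the hypotheses are not vacuous.

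\textbf{Your route.} You take DSIC, IR, DT and BF purely as hypotheses. This buys several things:
\begin{itemize}
  \item the argument covers every mechanism satisfying them;
  \item it handles a correlated $\mathcal{W}$ correctly by conditioning on $i$'s own type, whereas the paper averages over the marginal $\mathcal{W}_{-i}$;
  \item it explicitly addresses joint misreports of $(v_i,\epsilon_i)$;
  \item it gets existence constructively. The truthful profile is itself the equilibrium, which is exactly what the theorem claims. Nash's theorem would at best give some, possibly mixed, equilibrium, not the truthful one.
\end{itemize}
Your appeal to linearity of $c$ is unnecessary here. Utilities are evaluated at the true type in both profiles, for any cost function.

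\textbf{One slip in your decomposition: the labels are swapped.} Passing from $(v_i',\epsilon_i')$ to $(v_i,\epsilon_i')$ corrects the \emph{valuation}. Passing from $(v_i,\epsilon_i')$ to $(v_i,\epsilon_i)$ corrects the \emph{budget}.
\begin{itemize}
  \item With your intermediate point $(v_i,\epsilon_i')$, the second step is literally the paper's DT (valuation report truthful). The first step then needs DSIC to hold at a fake budget report.
  \item With the other intermediate point $(v_i',\epsilon_i)$, you would need DT at a fake valuation. That is the obstacle you describe.
\end{itemize}
Either way, one of the two single-coordinate properties must hold off the truthful path of the other coordinate. The paper's definitions do not literally state this. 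So your joint reading of DT, or equivalently reading the suppressed argument of DSIC as an arbitrary budget report, is a mild but genuine strengthening of the stated hypotheses. Since your final argument rests on that reading rather than on the chain, correctness is unaffected.
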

All the proofs are in the appendix.

\begin{remark}[Scope of DSIC/DT Guarantees]
\label{rem:dsic-scope}
Theorem~\ref{thm:pag-equilibrium} and DT are formally established for the
\emph{PAC mechanism} (Algorithm~\ref{alg:pac}).
\emph{MFG-RegretNet} satisfies only \textbf{approximate} DSIC via
augmented-Lagrangian regret minimisation; exact DSIC for arbitrary neural
payment rules requires monotonicity conditions not verifiable for deep
networks~\cite{dutting2024Optimal}.
All theoretical DSIC/DT/BNE claims apply to PAC; MFG-RegretNet
properties are characterised empirically (RQ1, Section~\ref{sec:exp}).
\end{remark}

\begin{algorithm}[t]
  \small
\caption{PAC Mechanism}
\label{alg:pac}
\KwIn{Privacy valuations $v$, datasets $\mathcal{D}$, budget $B$}
\KwOut{Perturbed data $\mathcal{D}'$, payments $p$}
All participants submit privacy valuation $v$ to AB\;
Sort valuations: $v_1\leq v_2\leq\cdots\leq v_n$\;
Let $k$ be the largest integer such that
$c\!\left(v_k,\,\tfrac{1}{n-k}\right)\leq\tfrac{B}{k}$\;
\tcp{$\epsilon{=}1/(n{-}k)$: equal-share DP budget across $n{-}k$ non-winners; ensures each winner's cost $c(v_k,\epsilon)$ fits within per-client budget $B/k$. This is the tightest uniform $\epsilon$ consistent with DSIC and BF.}
\For{$i=1$ \KwTo $n$}{
  Compute perturbed data:
  $\mathcal{D}'_i \leftarrow \mathcal{D}_i
   + \mathcal{N}(0,\,\Delta_f^2\sigma^2)$\;
  \eIf{$i\leq k$}{
    $p_i \leftarrow
    \min\!\left(\tfrac{B}{k},\,
    c\!\left(v_{k+1},\tfrac{1}{n-k}\right)\right)$\;
  }{
    $p_i \leftarrow 0$\;
  }
}
\Return $\mathcal{D}'$, $p$\;
\end{algorithm}

\begin{remark}[PAC Formal Allocation Rule and DSIC Verification]
\label{rem:pac-formal}
PAC selects the \emph{$k$ lowest-valuation clients} as winners
(monotone allocation: adding a client can only increase $k$, not
decrease it), assigns each winner
$\epsilon_i^{\mathrm{out}} = 1/(n-k)$ (equal-share DP budget), and
pays each winner
$p_i = \min(B/k,\,c(v_{k+1}, 1/(n-k)))$, i.e., a threshold
payment equal to the critical bid at which the marginal client would
be selected.
The three properties follow directly:
\emph{(i) DSIC}: reporting $v_i' > v_i$ cannot increase $p_i$
(threshold payment is determined by $v_{k+1}$, not $v_i$);
misreporting $v_i' < v_i$ either leaves $p_i$ unchanged (if $i$
remains a winner) or yields $p_i = 0$ (if $i$ drops out), so
$u_i(v_i) \geq u_i(v_i')$.
\emph{(ii) IR}: $p_i = \min(B/k, c(v_{k+1},\epsilon)) \geq
c(v_k,\epsilon) \geq c(v_i,\epsilon)$ for $i\leq k$ (since
$v_i \leq v_k \leq v_{k+1}$).
\emph{(iii) BF}: $\sum_i p_i \leq k \cdot B/k = B$.
The choice $\epsilon=1/(n-k)$ is the tightest uniform budget
satisfying $c(v_k,\epsilon)\leq B/k$; it preserves DT because a
client cannot gain by misreporting $\epsilon_i$ given the threshold
structure.
\emph{Complexity}: $\mathcal{O}(N\log N)$ (sorting); the MFG
approximation reduces deploy-time NE complexity from
$\mathcal{O}(N^2\log N)$ to $\mathcal{O}(N)$
(Section~\ref{sec:mfg}).
\end{remark}

\section{Privacy-Aware Auction Federated Learning Mechanism}
\label{sec:pac-fl}

We combine the privacy auction game with FL to incentivise users to
share data for three reasons. First, users are concerned about privacy
risks from local model leakage. Second, without revenue, users have
limited incentive to participate in FL training. Third, a unified
privacy policy ignores user heterogeneity, reducing model accuracy.
We propose the \emph{Privacy Auction Game-based FL} (PAG-FL) framework
(Fig.~\ref{fig:framework}).

\subsection{PAG-FL Framework}
\label{subsec:pagfl}

PAG-FL involves three entities:
\begin{enumerate}
  \item \textit{FL Server (FLS).}
        FLS publishes model training tasks (e.g., image classification,
        text translation) and purchases client gradients via the FL
        architecture.
  \item \textit{FL Broker (FLB).}
        FLB is the auction institution for user data privacy. It
        recruits FLCs eligible for a given task and runs the privacy
        auction to determine allocation and payment.
  \item \textit{FL Clients (FLC).}
        FLCs are users or organisations that provide raw training data.
        Each FLC trains locally, applies a DP mechanism, and uploads
        a private gradient to FLB for auction.
\end{enumerate}

The operational steps are:
\begin{enumerate}
  \item \textit{Publish Task.}  FLS publishes a learning task to FLB,
        which recruits eligible FLCs.
  \item \textit{Registration.}  FLCs register with FLB and declare
        initial privacy requirements $(\epsilon,\delta)$.
  \item \textit{Submit Bid.}  Each FLC $i$ submits a bid
        $b_i=(v_i,d_i,u_i,\epsilon_i)$, where $v_i$ is a valuation,
        $d_i$ is model size, $u_i$ is local model accuracy, and
        $\epsilon_i$ is the privacy budget.
  \item \textit{Privacy Payment.}  FLB runs the auction (PAC
        mechanism, Algorithm~\ref{alg:pac}) and FLS pays selected FLCs.
  \item \textit{Private Model.}  Each selected FLC applies its
        personalised DP mechanism to the local gradient.
  \item \textit{Model Aggregation.}  FLS aggregates received private
        gradients and broadcasts the updated global model.
\end{enumerate}

\begin{figure}[t]
  \centering
  \includegraphics[width=1\linewidth]{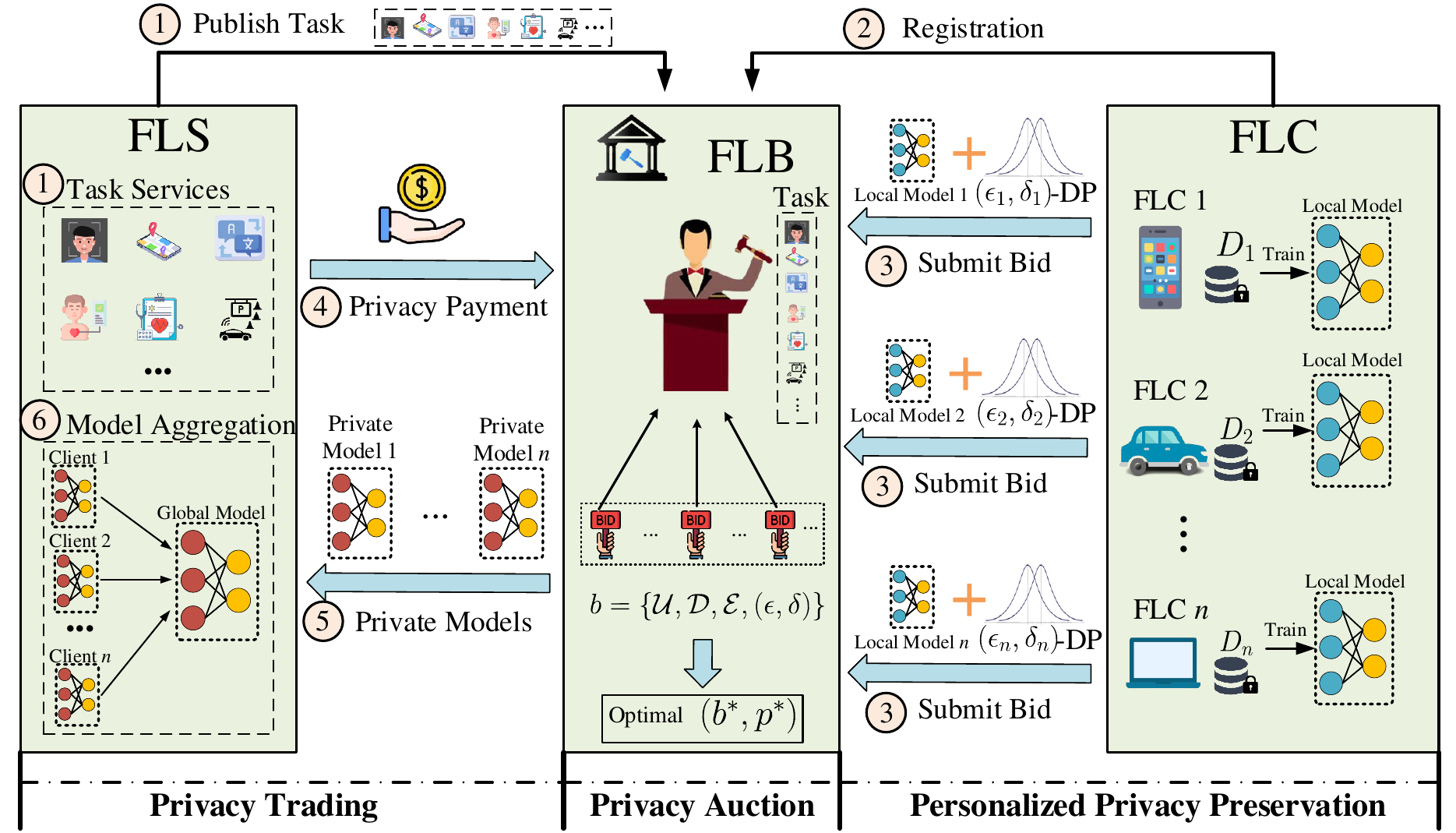}
  \caption{The PAG-FL framework showing six operational steps between
           the FL Server (FLS), FL Broker (FLB), and FL Clients (FLC).}
  \label{fig:framework}
\end{figure}

\subsection{FL Privacy Auction Model}
\label{subsec:flmodel}

\subsubsection{Valuation Function of FLC}

The valuation $v_i$ of FLC $i$ reflects the economic value that client
$i$ places on its privacy.  It is determined by three factors:
(i) the \textit{sensitivity} of the local dataset $\mathcal{D}_i$,
which is higher for medical or financial records than for generic
browsing data;
(ii) the \textit{data quality}, quantified by the local model accuracy
$u_i$ and dataset size $|\mathcal{D}_i|$; and
(iii) the \textit{non-IID degree} of the data distribution, measured
by the divergence between the local and global data distribution.
A concrete parameterisation is:
\begin{equation}
\label{eq:valuation-fn}
  v_i = \alpha_v \cdot \frac{u_i}{\epsilon_i}
        + \beta_v \cdot |\mathcal{D}_i|
        + \gamma_v \cdot \Delta_{\mathrm{noniid},i},
\end{equation}
where $\alpha_v,\beta_v,\gamma_v>0$ are weighting coefficients and
$\Delta_{\mathrm{noniid},i}$ quantifies the non-IID divergence
(e.g.\ total variation between local and global class frequencies).
This formulation captures the key observation that higher-quality and
more privacy-sensitive data should command higher valuations.
In the theoretical analysis (Sections~\ref{sec:mfg} through \ref{sec:regretnet}),
$v_i$ is treated as drawn from a generic distribution $\mathcal{W}$,
so the results hold independently of the specific form of
Eq.~(\ref{eq:valuation-fn}).
In experiments (Section~\ref{sec:exp}), we set
$(\alpha_v,\beta_v,\gamma_v)=(1,0,0)$ (i.e., $v_i\propto u_i/\epsilon_i$)
for the uniform and bimodal scenarios, and use
$v_i\sim\mathrm{LogNormal}(\mu{=}0,\sigma{=}0.5)$ for the realistic
scenario (Table~\ref{tab:privacy_config}), reflecting heterogeneous
real-world privacy sensitivities.

\begin{remark}[Privacy Cost Is Not $\epsilon_i$-Invariant]
\label{rem:valuation}
The cost $c_i(\epsilon_i^{\mathrm{out}})=\epsilon_i^{\mathrm{out}}\cdot v_i$ depends on the
\emph{allocated} $\epsilon_i^{\mathrm{out}}$ (mechanism output), not the declared type
$\epsilon_i$; the privacy and cost trade-off is genuine.
The setting $v_i\propto u_i/\epsilon_i$ is a marginal-unit-privacy-value model;
the realistic scenario uses $v_i\sim\mathrm{LogNormal}$ (independent of $\epsilon_i$)
as a robustness check.
\end{remark}

\subsubsection{Bidding Information}

In FL, global model accuracy depends on training dataset size,
non-IID data distribution, and model privacy
level~\cite{cheng2021dynamic,zhao2018federated}. FLC $i$'s bid is:
\begin{equation}
  b_i = \{v_i,\,d_i,\,u_i,\,\epsilon_i\},
\end{equation}
where $d_i$, $u_i$, and $\epsilon_i$ indicate model size, local
accuracy, and privacy budget, respectively.

\subsubsection{FL Aggregation Mechanism}

Under privacy preservation, the aggregated global gradient error is:
\begin{equation}
\label{eq:fl-loss}
  \mathcal{L}(\mathbf{w};\alpha,\epsilon)
  = \left\|\tilde{\mathbf{w}}_G - \mathbf{w}^{*}_G\right\|_\ell
  = \left\|\sum_{i=1}^{N}\alpha_i\mathcal{M}_{\epsilon_i}(\mathbf{w}_i)
           - \mathbf{w}^{*}_G\right\|_\ell.
\end{equation}

The FL privacy auction objective is:
\begin{equation}
\label{eq:P2}
  \mathcal{P}_2:\quad
  \min_{\mathbf{w},\alpha,\epsilon}
  \mathbb{E}_{(\boldsymbol{b},B)}\!\left[\mathcal{L}(\mathbf{w};\alpha,\epsilon)\right]
  \quad\text{s.t.: DSIC, IR, DT, BF.}
\end{equation}

\section{Mean-Field Game Approximation for the Privacy Auction}
\label{sec:mfg}

In Section~\ref{sec:market}, we formulated the PAG as the $N$-player
optimisation problem $\mathcal{P}_1$ (Eq.~(\ref{eq:P1})).
Although Theorem~\ref{thm:pag-equilibrium} guarantees the existence of
a Bayesian-Nash Equilibrium under the PAC mechanism
(Algorithm~\ref{alg:pac}), \emph{computing} that equilibrium becomes
computationally intractable as the number of FL clients $N$ grows.
This section introduces a Mean-Field Game (MFG) approximation that
reduces the per-round interaction complexity from
$\mathcal{O}(N^{2}\log N)$ to
$\mathcal{O}(N)$ at deploy time (Corollary~\ref{cor:complexity}),
with an $\mathcal{O}(N^{-1/2})$ equilibrium gap
(Theorem~\ref{thm:mfg_approx}). We model FL rounds by a continuous
time index $t\in[0,T]$ so that the HJB-FP formalism describes the \emph{law} of
valuations and privacy types; the auction \emph{within} each round
remains the static PAG of Sec.~\ref{sec:market}, embedded in the
representative-agent utility~Eq.~(\ref{eq:mf-utility}).

\subsection{Scalability and MFG State Space}
\label{subsec:mfg-scalability}

In the $N$-player PAG, the NE requires solving $N$ coupled optimisations, which is
NP-hard in general~\cite{daskalakis2008complexity} and costs $\mathcal{O}(N^2\log N)$
per round for the PAC mechanism (sorting plus $\mathcal{O}(N)$ best-response iterations).
The MFG framework~\cite{lasry2007mean,huang2006large} replaces explicit
agent-to-agent coupling with a representative agent interacting with the
aggregate population law $\mu_t$, reducing deploy-time complexity to
$\mathcal{O}(N)$ (Corollary~\ref{cor:complexity}) with an
$\mathcal{O}(N^{-1/2})$ equilibrium gap (Theorem~\ref{thm:mfg_approx}).

Each DO's state at round $t$ is $\mathbf{s}_i(t)=(v_i(t),\epsilon_i(t))\in\mathcal{S}\triangleq\mathcal{V}\times\mathcal{E}$,
evolving via the controlled SDE:
\begin{equation}
\label{eq:sde}
  d\mathbf{s}_i(t)
  = f\!\bigl(\mathbf{s}_i(t),b_i(t),\mu_t\bigr)\,dt + \sigma\,dW_i(t),
\end{equation}
where $b_i(t)\in\mathcal{B}$ is the bid, $f$ is the drift,
$\sigma>0$ is a noise coefficient, and $\{W_i(t)\}$ are independent
Brownian motions.
The empirical distribution
$\mu_t^N=\frac{1}{N}\sum_i\delta_{\mathbf{s}_i(t)}$
converges weakly to the mean-field flow $\mu_t$ under standard Lipschitz
regularity conditions on $f$ (Assumption~\ref{ass:regularity}).

\begin{assumption}[Regularity Conditions]
\label{ass:regularity}
\begin{enumerate}[(R1)]
  \item \textit{Lipschitz drift:} $\|f(s,b,\mu)-f(s',b,\nu)\|\leq L_f(\|s-s'\|+W_1(\mu,\nu))$.
  \item \textit{Compact controls:} $\mathcal{B}\subset\mathbb{R}^m$ compact and convex.
  \item \textit{Bounded costs:} $c(v,\epsilon)=v\cdot\epsilon$ bounded and continuous.
  \item \textit{Finite-moment initial law:} $\mu_0=\mathcal{W}\in\mathcal{P}_2(\mathcal{S})$.
\end{enumerate}
\end{assumption}

\subsection{Hamilton-Jacobi-Bellman and Fokker-Planck Equations}
\label{subsec:hjb}

Full derivations are in Appendix~\ref{app:mfg-derivation}.
In the mean-field regime, each representative agent maximises
cumulative \emph{mean-field utility}
\begin{equation}
\label{eq:mf-utility}
  u^{\mathrm{MFG}}(s,b;\mu_t)
  = \tilde{p}(b;\mu_t) - c(v,\epsilon(b)),
\end{equation}
where $\tilde{p}(b;\mu_t)=\mathbb{E}[p(b,B_{-i})]$ is the
mean-field payment.
The \emph{mean-field bid} is a sufficient statistic for the competitive environment:
\begin{equation}
\label{eq:mfg-bid}
  b_{\mathrm{MFG}}(t)
  = \mathbb{E}_{\mu_t}[b]
  = \int_{\mathcal{S}} b(s;\mu_t)\,d\mu_t(s).
\end{equation}
The optimal bid induced by the HJB Hamiltonian is
\begin{equation}
\label{eq:optimal-bid}
  b^*(s,t;\mu)
  = \arg\max_{b\in\mathcal{B}}
    \big[\tilde{p}(b;\mu_t) - c(v,\epsilon(b))
         + (\nabla_s\Phi)^{\top}f(s,b,\mu_t)\big].
\end{equation}
The value function $\Phi$ and the population density $\mu_t$
are jointly characterised by the coupled HJB-Fokker-Planck system
(Eqs.~(\ref{eq:hjb}) and (\ref{eq:fp}) in the appendix);
their fixed point defines the Mean-Field Equilibrium below.

\subsection{Mean-Field Equilibrium: Definition and Existence}
\label{subsec:mfe}

\begin{definition}[Mean-Field Equilibrium (MFE)]
\label{def:mfe}
A pair $(\Phi^*,\mu^*)$ is a MFE if:
\begin{enumerate}[(i)]
  \item \textit{Optimality:} Given $\mu^*$, $\Phi^*$ solves
        the HJB equation, Eq.~(\ref{eq:hjb}), and $b^*$ is the maximiser
        of Eq.~(\ref{eq:optimal-bid});
  \item \textit{Consistency:} $\mu^*$ solves the FP equation,
        Eq.~(\ref{eq:fp}), under
        $b^*(\cdot\,;\mu^*)$, with $\mu_0^*=\mathcal{W}$.
\end{enumerate}
\end{definition}

\begin{theorem}[Existence of MFE]
\label{thm:mfe-exist}
Under Assumption~\ref{ass:regularity}, the privacy auction MFG
admits at least one Mean-Field Equilibrium $(\Phi^*,\mu^*)$.
\end{theorem}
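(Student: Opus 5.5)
The plan is the standard Schauder fixed-point argument for MFGs (Lasry--Lions, Huang--Malham\'e--Caines), applied to the best-response map on flows of measures. Fix the time horizon $[0,T]$ and take the set
\[
\mathcal{C}=\Bigl\{\mu\in C\bigl([0,T];\mathcal{P}_2(\mathcal{S})\bigr):\ \mu_0=\mathcal{W},\ \sup_{t}\textstyle\int\|s\|^2d\mu_t\le K,\ W_1(\mu_t,\mu_{t'})\le K|t-t'|^{1/2}\Bigr\},
\]
with $K$ chosen large enough. By (R4) this set is nonempty. It is convex, since mixtures of flows keep the moment and H\"older bounds. With the uniform second-moment bound and equicontinuity, Prokhorov and Arzel\`a--Ascoli make it compact in $C([0,T];\mathcal{P}_1)$. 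On $\mathcal{C}$ define $\Gamma(\mu)$ in two stages. First, given $\mu$, solve the HJB equation Eq.~(\ref{eq:hjb}) for $\Phi^\mu$ and select the maximiser $b^*(\cdot;\mu)$ of Eq.~(\ref{eq:optimal-bid}). Second, solve the FP equation Eq.~(\ref{eq:fp}) under the drift $f(s,b^*(s,t;\mu),\mu_t)$ started at $\mathcal{W}$. An MFE in the sense of Definition~\ref{def:mfe} is exactly a fixed point $\mu^*=\Gamma(\mu^*)$.

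The key steps, in order, are as follows.

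\emph{(a) Well-posedness of the HJB for frozen $\mu$.} The Hamiltonian $H(s,p,\mu_t)=\max_{b\in\mathcal{B}}[\tilde p(b;\mu_t)-c(v,\epsilon(b))+p^\top f(s,b,\mu_t)]$ is a maximum over the compact set $\mathcal{B}$ of (R2). It is Lipschitz in $p$, and it is bounded in $s$ for bounded $p$ by (R1) and (R3). Since $\sigma>0$, the equation is uniformly parabolic. Standard parabolic theory, or a viscosity plus regularity argument, then gives a unique bounded solution $\Phi^\mu$ with $\nabla_s\Phi^\mu$ bounded uniformly in $\mu\in\mathcal{C}$.

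\emph{(b) Well-posedness of the FP equation.} The resulting drift is bounded and measurable. Nondegenerate SDEs with such drift have unique weak solutions (Girsanov / Stroock--Varadhan), so the law $\Gamma(\mu)_t$ is well defined. Because $\sigma$ is constant and the drift is bounded, $\mathbb{E}\|s(t)-s(t')\|^2\le C|t-t'|$ and the second moments stay bounded. This shows $\Gamma(\mathcal{C})\subseteq\mathcal{C}$ once $K$ is taken large enough.

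\emph{(c) Continuity of $\Gamma$.} Suppose $\mu^n\to\mu$ in $\mathcal{C}$. Stability of viscosity solutions, together with continuity of $H$ in $\mu$, gives $\Phi^{\mu^n}\to\Phi^\mu$ locally uniformly. The uniform parabolic estimates upgrade this to convergence of gradients. The maximisers then converge, and the FP laws converge by stability of the martingale problem. Schauder's theorem on the compact convex set $\mathcal{C}$ then yields a fixed point.

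I expect step (c) to be the main obstacle, specifically the continuity of the selected maximiser $b^*$. The argmax in Eq.~(\ref{eq:optimal-bid}) may be set-valued, because the mean-field payment $\tilde p(b;\mu_t)$ inherits threshold discontinuities from PAC-type payments. To handle this, I would first assume, or arrange by mollification, that $\tilde p$ is continuous in $(b,\mu)$; this is consistent with it being an expectation over a population law with a density, which $\sigma>0$ provides for $t>0$.

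If uniqueness of the maximiser still fails, I would switch to relaxed controls. Replace $b$ by $\nu\in\mathcal{P}(\mathcal{B})$, which is compact by (R2). The best-response correspondence is then nonempty, convex-valued and upper hemicontinuous by Berge's maximum theorem. Applying Kakutani--Fan--Glicksberg instead of Schauder gives a relaxed MFE. Finally, I would use the convexity of $\mathcal{B}$ together with the linearity of $c$ and of $f$ in $b$, which I would state explicitly as an assumption if needed, to project the relaxed equilibrium back to a strict bid policy.
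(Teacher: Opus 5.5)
Your route is the paper's: Schauder's theorem applied to the map sending a flow $\mu$ to the law of the state driven by the HJB-optimal bid, with compactness from moment bounds and Prokhorov. You are more careful than the paper's sketch. That sketch gets continuity of $\mathcal{T}$ only from Lipschitz dependence of $f$ and $\tilde p$ on $\mu$, citing (R1) for $\tilde p$ although that is Assumption~\ref{ass:lipschitz-u}, which is not a hypothesis of this theorem. It also never says how the maximiser $b^*$ depends on $\mu$, and you rightly single that out as the crux.

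Your way around it fails at the de-relaxation step. Kakutani--Fan--Glicksberg gives a flow $\mu^*$ driven by $\int f(s,b,\mu^*_t)\,\nu^*_{s,t}(db)$, with $\nu^*_{s,t}$ supported on the argmax set $A(s,t)$ of $h(b)=\tilde p(b;\mu^*_t)-c(v,\epsilon(b))+\nabla_s\Phi^*(s,t)^{\top}f(s,b,\mu^*_t)$. An MFE in the sense of Definition~\ref{def:mfe} requires a measurable $b^*(s,t)\in A(s,t)$ producing the same drift. With $f$ affine in $b$ the natural candidate is the barycentre of $\nu^*_{s,t}$. It lies in $\mathcal{B}$ by convexity, but it lies in $A(s,t)$ only if $A(s,t)$ is convex, i.e.\ if $h$ is concave in $b$ (a Filippov--Roxin type condition). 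Under your linearity assumptions on $c$ and $f$, this is exactly concavity of $\tilde p(\cdot;\mu^*_t)$. The threshold-type payments you yourself point to violate it, so you only obtain a relaxed (mixed-strategy) equilibrium, not an MFE.

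Mollifying $\tilde p$ does not rescue this. It yields an MFE of a different game, and removing the mollifier brings back the same selection problem. Nor does the density of $\mu_t$ smooth $\tilde p$ in $b$, since the induced law of opponents' bids can have atoms, e.g.\ at maximisers on the boundary of $\mathcal{B}$. So an additional hypothesis is unavoidable, and the paper's sketch silently needs one too. The cleanest is joint continuity of $\tilde p$ and $f$ in $(b,\mu)$ plus strict concavity of $h$ in $b$. This gives a unique maximiser, continuous in $(s,\nabla_s\Phi,\mu)$ by Berge's theorem, after which your steps (a)--(c) go through with no relaxed controls.

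A smaller slip: (R1) gives $f$ only linear growth, not boundedness, so the drift in step (b) is not bounded. Weak well-posedness survives via Girsanov. However, the Gronwall bound on the second moments of $\Gamma(\mu)$ then contains a term proportional to $K$, through $W_1(\mu_t,\delta_0)^2\le K$, whose coefficient is not small unless $L_fT$ is. To get $\Gamma(\mathcal{C})\subseteq\mathcal{C}$, use a time-weighted constraint such as $\int\|s\|^2\,d\mu_t\le Ke^{\lambda t}$ with $\lambda$ large, rather than simply taking $K$ large.
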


\subsection{MFG Approximation Guarantee}
\label{subsec:mfg-approx}

\begin{assumption}[Lipschitz Mean-Field Utility]
\label{ass:lipschitz-u}
The mean-field payment $\tilde{p}(b;\mu)$ is $L_u$-Lipschitz in $\mu$
w.r.t.\ $W_1$:
\begin{equation}
  |\tilde{p}(b;\mu)-\tilde{p}(b;\nu)|
  \leq L_u W_1(\mu,\nu),
  \quad\forall b\in\mathcal{B},\;\mu,\nu\in\mathcal{P}(\mathcal{S}).
\end{equation}
This holds for Algorithm~\ref{alg:pac} since its empirical CDF is
$1$-Lipschitz w.r.t.\ $W_1$.
\end{assumption}

\begin{definition}[$\varepsilon$-Nash Equilibrium]
\label{def:eps-nash}
A strategy profile $\mathbf{b}^N=(b_1^N,\ldots,b_N^N)$ is an
$\varepsilon$-Nash Equilibrium if, for all $i\in\mathcal{N}$ and
$\hat{b}_i\in\mathcal{B}$:
\begin{equation}
  \mathbb{E}\!\left[u_i(\hat{b}_i,b_{-i}^N)
                  - u_i(b_i^N,b_{-i}^N)\right]
  \leq\varepsilon.
\end{equation}
\end{definition}

\begin{theorem}[MFG Approximation of Nash Equilibrium]
\label{thm:mfg_approx}
Let Assumptions~\ref{ass:regularity} and~\ref{ass:lipschitz-u} hold,
and let $(\Phi^*,\mu^*)$ be a MFE.  Define the MFE-induced strategy
profile:
\begin{equation}
\label{eq:mfe-profile}
  b_i^{N,*} = b^*(\mathbf{s}_i,t;\mu^*),
  \quad i=1,\ldots,N.
\end{equation}
Then $\mathbf{b}^{N,*}=(b_1^{N,*},\ldots,b_N^{N,*})$ is an
$\varepsilon_N$-Nash Equilibrium (Definition~\ref{def:eps-nash})
of the PAG, where:
\begin{equation}
\label{eq:eps-bound}
  \varepsilon_N
  = \frac{2L_u C}{\sqrt{N-1}}
  = \mathcal{O}\!\left(\frac{1}{\sqrt{N}}\right),
\end{equation}
and $C>0$ depends only on the second moment of $\mu^*$ and the
dimension of $\mathcal{S}$.
\end{theorem}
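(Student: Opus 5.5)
The approach is the standard MFG-to-$N$-player argument of Huang--Malham\'e--Caines, specialised to the per-round payment structure. Fix agent $i$ and a deviation $\hat b_i\in\mathcal{B}$, and let all other agents play the MFE profile $b_j^{N,*}=b^*(\mathbf{s}_j,t;\mu^*)$ from Eq.~(\ref{eq:mfe-profile}). Agent $i$ faces the empirical law of the others,
\[
\mu^{N}_{-i}=\frac{1}{N-1}\sum_{j\neq i}\delta_{\mathbf{s}_j},
\]
and its realised expected payment is $\tilde p(\cdot;\mu^N_{-i})$. The cost term $c(v_i,\epsilon(b))$ does not depend on the others. I will therefore decompose the deviation gain as
\begin{align*}
&\mathbb{E}\bigl[u_i(\hat b_i,b_{-i}^{N,*})-u_i(b_i^{N,*},b_{-i}^{N,*})\bigr]\\
&\quad=\mathbb{E}\bigl[u^{\mathrm{MFG}}(\mathbf{s}_i,\hat b_i;\mu^N_{-i})-u^{\mathrm{MFG}}(\mathbf{s}_i,\hat b_i;\mu^*)\bigr]\\
&\qquad+\mathbb{E}\bigl[u^{\mathrm{MFG}}(\mathbf{s}_i,\hat b_i;\mu^*)-u^{\mathrm{MFG}}(\mathbf{s}_i,b_i^{N,*};\mu^*)\bigr]\\
&\qquad+\mathbb{E}\bigl[u^{\mathrm{MFG}}(\mathbf{s}_i,b_i^{N,*};\mu^*)-u^{\mathrm{MFG}}(\mathbf{s}_i,b_i^{N,*};\mu^N_{-i})\bigr].
\end{align*}

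The middle term is $\le 0$ by the optimality part of Definition~\ref{def:mfe}: $b^*$ maximises the Hamiltonian in Eq.~(\ref{eq:optimal-bid}) against $\mu^*$. Within a single round, the static PAG embedded in Eq.~(\ref{eq:mf-utility}) makes this the pointwise statement that $b^*$ maximises $\tilde p(b;\mu^*)-c(v,\epsilon(b))$. Over the horizon, I would argue it via the verification theorem for the HJB equation~(\ref{eq:hjb}), using the fact that $\Phi^*$ is the value function for the representative agent.

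The first and third terms are each bounded, using Assumption~\ref{ass:lipschitz-u}, by $L_u\,\mathbb{E}[W_1(\mu^N_{-i},\mu^*)]$. This gives a total gain of at most $2L_u\,\mathbb{E}[W_1(\mu^N_{-i},\mu^*)]$, which accounts for the factor $2$ in Eq.~(\ref{eq:eps-bound}).

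It remains to show $\mathbb{E}[W_1(\mu^N_{-i},\mu^*)]\le C/\sqrt{N-1}$. Under the MFE profile, the states $\mathbf{s}_j$, $j\ne i$, are driven by independent Brownian motions in Eq.~(\ref{eq:sde}) and start i.i.d.\ from $\mu_0=\mathcal{W}$. By the consistency condition, each has marginal law $\mu^*_t$ solving the FP equation~(\ref{eq:fp}). The drift depends on the population only through $\mu^*$, a deterministic flow, so the $N-1$ states are exactly i.i.d.\ with law $\mu^*_t$, and no propagation-of-chaos coupling is needed. I would then invoke an empirical-measure concentration bound in $W_1$ (Fournier--Guillin type) with a constant depending on $M_2(\mu^*)$ and $\dim\mathcal{S}$. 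Finiteness of $M_2(\mu^*_t)$ follows from (R1), (R4) and Gr\"onwall applied to the SDE.

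The main obstacle is this last rate. Fournier--Guillin gives $(N-1)^{-1/2}$ for $W_1$ only when $\dim\mathcal{S}=1$. In the relevant dimension $d=2$ (since $\mathcal{S}=\mathcal{V}\times\mathcal{E}$) it gives $(N-1)^{-1/2}\log(N-1)$, and $(N-1)^{-1/d}$ for $d\ge3$. I would first try to avoid the full $W_1$ rate by noting that the PAC payment depends on $\mu$ only through a one-dimensional functional, namely the empirical CDF of valuations evaluated at the critical threshold, as stated under Assumption~\ref{ass:lipschitz-u}. For that functional, a Dvoretzky--Kiefer--Wolfowitz or plain variance bound yields $C/\sqrt{N-1}$ with $C$ depending only on second moments. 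If that fails, I would state the bound with the dimension-dependent rate absorbed into $C$, restricting the claim to the effective one-dimensional projection.

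A secondary technical point is measurability of the arbitrary deviation $\hat b_i$. Because agent $i$ is excluded from $\mu^N_{-i}$, its deviation does not perturb the others' law, so a unilateral deviation leaves the concentration bound untouched.
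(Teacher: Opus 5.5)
Your proposal follows the paper's proof almost step for step. The paper likewise takes $\{\mathbf{s}_j\}_{j\ne i}$ to be i.i.d.\ by MFE consistency and uses the same three-term split. It discards (II) by optimality of $b^*$, silently dropping the $(\nabla_s\Phi)^{\top}f$ term of Eq.~(\ref{eq:optimal-bid}), which you at least confront. It then bounds $|\mathrm{(I)}|+|\mathrm{(III)}|\le 2L_uW_1(\mu^N_{-i},\mu^*)$ via Assumption~\ref{ass:lipschitz-u} and closes with Fournier--Guillin. The obstacle you flag is real, and the paper does not resolve it. Its Step~1 simply asserts $\mathbb{E}[W_1(\mu^N_{-i},\mu_0^*)]\le C/\sqrt{N-1}$ with $C$ depending on second moment and dimension. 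That is false in general for a two-dimensional $\mathcal{S}$: for $\mu^*$ uniform on a square, $\mathbb{E}\,W_1\asymp\sqrt{\log N/N}$ by Ajtai--Koml\'os--Tusn\'ady. It degrades further to $N^{-1/d}$ when $d=\dim\mathcal{S}\ge 3$, e.g.\ for multi-item bids. So you have not missed an idea the paper has; its proof has the same hole.

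Your repairs, however, do not close the gap under the stated hypotheses. A $\log(N-1)$ factor cannot be absorbed into an $N$-independent $C$. The one-dimensional route needs an extra hypothesis: that $\tilde p(b;\mu)$ depends on $\mu$ only through its valuation marginal. Granted that, Assumption~\ref{ass:lipschitz-u} applied to measures of the form $\alpha\otimes\delta_{e_0}$ gives Lipschitz continuity in one-dimensional $W_1$. Then $\mathbb{E}\,W_1\le(N-1)^{-1/2}\int\sqrt{F(1-F)}\,dx$ finishes the argument, with a constant depending on the valuation law rather than only on second moment and dimension. DKW, by contrast, controls Kolmogorov distance, in which Assumption~\ref{ass:lipschitz-u} gives no Lipschitz bound. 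Appealing to PAC would prove a narrower theorem, and the paper's one-line claim that PAC satisfies Assumption~\ref{ass:lipschitz-u} is itself doubtful, since the PAC payment jumps when $k$ changes.

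A cleaner exit is to drop the identification $u_i(\hat b_i,b^{N,*}_{-i})=u^{\mathrm{MFG}}(\mathbf{s}_i,\hat b_i;\mu^N_{-i})$ behind your first display, which the paper's Step~4 also makes implicitly. By Eq.~(\ref{eq:mf-payment}), $\tilde p(\hat b_i;\mu^N_{-i})$ resamples opponents with replacement from the empirical law, so it is not the realised payment $p(\hat b_i,b^{N,*}_{-i})$. You have already argued that the opponents' states are exactly i.i.d.\ with law $\mu^*$ and independent of $\mathbf{s}_i$. Hence $\mathbb{E}[p(\hat b_i,b^{N,*}_{-i})\mid\mathbf{s}_i]=\tilde p(\hat b_i;\mu^*)$, and likewise for $b_i^{N,*}$, so (I) and (III) have zero conditional mean. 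In the static per-round reading you already use for (II), Definition~\ref{def:eps-nash} then holds with $\varepsilon_N=0$, and no concentration bound is needed at all.
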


\begin{corollary}[Computational Complexity Reduction]
\label{cor:complexity}
Once $(\Phi^*,\mu^*)$ is computed at fixed cost (via
the coupled HJB-FP system, Eqs.~(\ref{eq:hjb}) and~(\ref{eq:fp})), evaluating
$b^*(\mathbf{s}_i;\mu^*)$ for each agent costs $\mathcal{O}(1)$,
giving total deployment cost $\mathcal{O}(N)$, which is asymptotically
better than the $\mathcal{O}(N^2\log N)$ of direct Nash computation.
\end{corollary}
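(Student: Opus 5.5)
The statement immediately above is Corollary~\ref{cor:complexity}, so the plan is a direct cost accounting that separates a one-time offline phase from the per-round deployment phase. The offline phase solves the coupled HJB--FP system (Eqs.~(\ref{eq:hjb}) and~(\ref{eq:fp})) for $(\Phi^*,\mu^*)$. Theorem~\ref{thm:mfe-exist} guarantees that a solution exists, and solving for it does not involve $N$. The reason is that the system is posed on the fixed state space $\mathcal{S}=\mathcal{V}\times\mathcal{E}$ with initial law $\mu_0^*=\mathcal{W}$, which is a population-level object and not an empirical measure over the $N$ clients. Any fixed discretisation (grid size, time steps, fixed-point iterations) therefore yields a cost $C_{\mathrm{MFE}}$ that depends only on the discretisation parameters and on the constants in Assumption~\ref{ass:regularity}. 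I will state this as the ``fixed cost'' assumed in the corollary.

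Next, I will fix $\mu^*$ and examine the maximisation in Eq.~(\ref{eq:optimal-bid}). The objective $\tilde p(b;\mu_t^*)-c(v,\epsilon(b))+(\nabla_s\Phi^*)^\top f(s,b,\mu_t^*)$ depends on agent $i$ only through its own state $\mathbf{s}_i$. It depends on the rest of the population only through the precomputed $\mu^*$, or equivalently through the scalar $b_{\mathrm{MFG}}$ from Eq.~(\ref{eq:mfg-bid}), and not through $b_{-i}$.

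\begin{itemize}
\item \emph{Per-agent cost.} Since $\mathcal{B}$ is compact (R2), $b^*(\cdot;\mu^*)$ can be tabulated offline on the discretised $\mathcal{S}$ as part of $C_{\mathrm{MFE}}$. Evaluating $b^*(\mathbf{s}_i;\mu^*)$ for one agent is then a table lookup or interpolation, which costs $\mathcal{O}(1)$ independently of $N$.
\item \emph{Deployment cost.} Doing this for $N$ agents gives $\mathcal{O}(N)$. Adding the statistic $b_{\mathrm{MFG}}$, if it is recomputed from the empirical measure, costs one further $\mathcal{O}(N)$ pass. The total deployment cost is therefore $\mathcal{O}(N)$, and Theorem~\ref{thm:mfg_approx} certifies that the resulting profile is an $\mathcal{O}(N^{-1/2})$-Nash equilibrium.
\end{itemize}

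For comparison, I will recall the baseline from Section~\ref{subsec:mfg-scalability}. Direct Nash computation under PAC needs $\mathcal{O}(N)$ best-response iterations. Each iteration re-runs Algorithm~\ref{alg:pac}, which sorts the bids and so costs $\mathcal{O}(N\log N)$, for a total of $\mathcal{O}(N^2\log N)$. Since $N/(N^2\log N)\to 0$ as $N\to\infty$, the $\mathcal{O}(N)$ deployment cost is asymptotically better.

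The main obstacle is justifying that evaluating $b^*$ costs $\mathcal{O}(1)$, that is, that the per-agent maximisation does not hide a dependence on $N$. My first approach is the precomputation argument above, which folds the argmax into the $N$-independent offline phase. If the paper instead means online maximisation, I will fall back on a fixed-accuracy grid search over the compact set $\mathcal{B}$, whose size depends on the target accuracy but not on $N$.
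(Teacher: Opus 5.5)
Your proposal is correct and follows the paper's own route. The paper gives no separate proof; it justifies the corollary by exactly this accounting: a one-time MFE solve, an $\mathcal{O}(1)$ evaluation of $b^*(\mathbf{s}_i;\mu^*)$ per agent, and a baseline of $\mathcal{O}(N)$ best-response iterations, each re-sorting at $\mathcal{O}(N\log N)$ (Section~\ref{subsec:mfg-scalability}). One small caution: rely on the corollary's ``fixed cost'' hypothesis rather than claiming the offline phase is intrinsically $N$-free, because $\tilde p$ in Eq.~(\ref{eq:mf-payment}) is defined through $\mu_t^{\otimes(N-1)}$ and so the HJB Hamiltonian does depend on $N$.
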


\subsection{MFG-Reduced Privacy Auction Problem}
\label{subsec:p3}

Theorem~\ref{thm:mfg_approx} and Corollary~\ref{cor:complexity}
justify replacing $\mathcal{P}_1$ with the tractable
MFG-reduced problem:
\begin{equation}
\label{eq:P3}
  \mathcal{P}_3:\quad
  \max_{q,p,\mu^*}
  \mathbb{E}_{s\sim\mu_0^*}\!\left[
    \sum_{i=1}^N p_i(b_i^{N,*};\,b_{\mathrm{MFG}})
  \right]
\end{equation}
\begin{align*}
  \text{subject to}\quad
  &\text{FP},
   \text{HJB},
   \text{DSIC, IR, DT, BF.}
\end{align*}

$\mathcal{P}_3$ has three advantages over $\mathcal{P}_1$:
(a) \emph{reduced dimensionality}: opponent interaction is
summarised by the low-dimensional statistic
$b_{\mathrm{MFG}}\in\mathbb{R}^m$;
(b) \emph{guaranteed approximation}:
$\mathcal{O}(N^{-1/2})$-Nash by Theorem~\ref{thm:mfg_approx};
(c) \emph{neural-network solvability}: gradient-based optimisation
is applicable.

\begin{remark}[Approximation Hierarchy]
\label{rem:approx-hierarchy}
Theorem~\ref{thm:mfg_approx} guarantees that the MFE strategy
$b^*(\cdot;\mu^*)$ is an $\varepsilon_N$-Nash Equilibrium
\emph{given the exact mean-field flow} $\mu^*$ (solved from the
coupled HJB-FP system). In practice, MFG-RegretNet introduces
a \emph{second} approximation: the true $\mu_t$ is replaced by the
empirical mean bid $b_{\mathrm{MFG}}=\frac{1}{N}\sum_i b_i$,
which is a finite-particle estimate of $\mathbb{E}_{\mu_t}[b]$.
The resulting implementation thus approximates $\mathcal{P}_3$
rather than solving it exactly. The theoretical $\mathcal{O}(N^{-1/2})$
guarantee therefore applies to the MFE solution of $\mathcal{P}_3$,
while the neural network approximates that solution via
data-driven training under the MFG alignment regulariser
(Sec.~\ref{subsec:train-mfg}).
\end{remark}

\begin{figure}[t]
  \centering
  \includegraphics[width=1\linewidth]{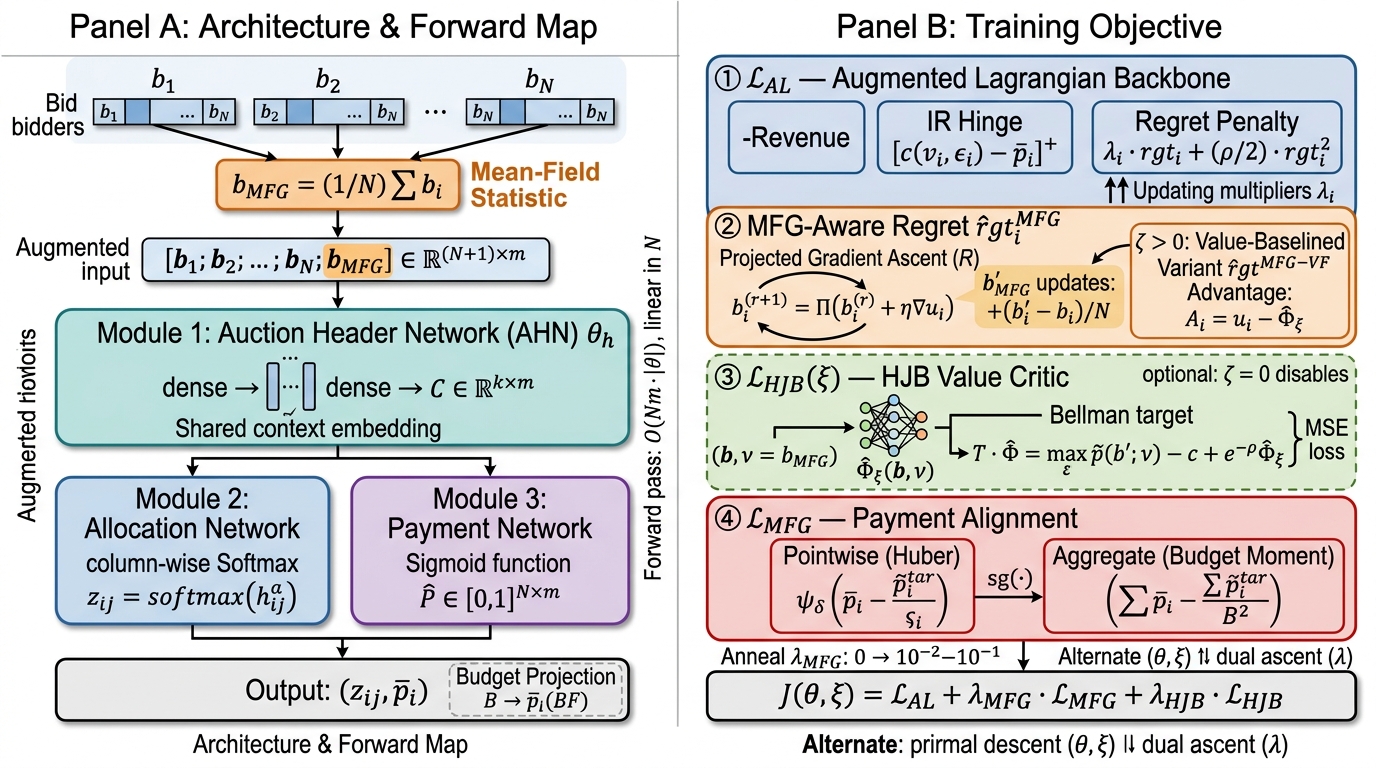}
  \caption{MFG-RegretNet (Sec.~\ref{subsec:arch-fwd}): forward map
    with $b_{\mathrm{MFG}}$; Sec.~\ref{subsec:train-mfg}: training objective.}
  \label{fig:mfg_regretnet}
\end{figure}
\section{MFG-RegretNet: Deep Learning Auction Solver}
\label{sec:regretnet}
Remark~\ref{rem:approx-hierarchy} identifies two approximation layers:
(i) the MFE $\varepsilon_N$-approximates the $N$-player Nash Equilibrium
(Theorem~\ref{thm:mfg_approx}), with $\varepsilon_N=\mathcal{O}(N^{-1/2})$
and deploy-time cost $\mathcal{O}(N)$; and (ii) computing the exact MFE
requires solving the coupled HJB-FP system
(Eqs.~(\ref{eq:hjb}) and~(\ref{eq:fp})), which is costly when $\mathcal{W}$
is unknown and must be run iteratively.
MFG-RegretNet addresses layer (ii) by \emph{parameterising} $(q,p)$
as neural networks conditioned on the empirical mean bid
$b_{\mathrm{MFG}}$, a tractable $\mathcal{O}(Nm)$ surrogate for
the full $\mu_t$, while enforcing DSIC/IR/BF via differentiable
augmented Lagrangian optimisation. This approximates $\mathcal{P}_3$
without closed-form distributional assumptions, while retaining the
MFG equilibrium structure via the alignment regulariser
(Sec.~\ref{subsec:train-mfg}).

We propose \textbf{MFG-RegretNet}. Fig.~\ref{fig:mfg_regretnet} spans
two parts: \textbf{(A)}~architecture and forward map
(Sec.~\ref{subsec:arch-fwd}) with MFG statistic $b_{\mathrm{MFG}}$, AHN,
Allocation, and Payment; \textbf{(B)}~training objective
(Sec.~\ref{subsec:train-mfg}) via augmented Lagrangian on MFG-aware
regret (optional value critic~Eq.~(\ref{eq:mfg-regret-vf})), HJB
critic fit~Eq.~(\ref{eq:L-hjb-critic}), and payment
alignment~Eq.~(\ref{eq:L-mfg}).

\subsection{Architecture and Forward Mechanism}
\label{subsec:arch-fwd}
Given $\mathbf{b}$, the broker sets
$b_{\mathrm{MFG}}=\frac{1}{N}\sum_i b_i$.
Three modules apply in sequence; $b_{\mathrm{MFG}}$ augments the
input~Eq.~(\ref{eq:ahn-input}), unlike vanilla
RegretNet~\cite{dutting2024Optimal}.

\subsubsection{Module 1: Auction Header Network (AHN)}
Given bid profile $\mathbf{b}\in\mathbb{R}^{N\times m}$ and
$b_{\mathrm{MFG}}\in\mathbb{R}^m$, the AHN encodes the augmented input:
\begin{equation}
\label{eq:ahn-input}
  \mathbf{a}
  = [b_1;\,b_2;\,\ldots;\,b_N;\,b_{\mathrm{MFG}}]
  \in\mathbb{R}^{(N+1)\times m},
\end{equation}
into a shared context embedding:
\begin{equation}
\label{eq:ahn-output}
  \mathbf{C}
  = \mathrm{AHN}_{\theta_h}(\mathbf{a})
  \in\mathbb{R}^{k\times m},
\end{equation}
where $k\leq N$ is the number of selected bidders and $\theta_h$ are
AHN parameters. This shared representation $\mathbf{C}$ is then 
passed to both the Allocation and Payment Networks.

\subsubsection{Module 2: Allocation Network}
The Allocation Network maps $\mathbf{C}$ to allocation matrix
$\mathbf{Z}\in[0,1]^{N\times m}$ via column-wise Softmax:
\begin{equation}
\label{eq:alloc-net}
  z_{ij}
  = \frac{\exp(h_{ij}^{(a)})}
         {\sum_{i'=1}^N\exp(h_{i'j}^{(a)})},
  \quad i\in\mathcal{N},\;j\in[m],
\end{equation}
ensuring $\sum_i z_{ij}\leq 1$ and $z_{ij}\geq 0$ (allocation
feasibility). The allocation utility for bidder $i$ is:
\begin{equation}
\label{eq:alloc-utility}
  \tilde{g}_i(\mathbf{b};\theta_a)
  = \sum_{j=1}^m z_{ij}\cdot b_{ij}.
\end{equation}

\subsubsection{Module 3: Payment Network}
The Payment Network maps $\mathbf{C}$ to payment matrix
$\hat{\mathbf{P}}\in[0,1]^{N\times m}$ via Sigmoid:
\begin{equation}
\label{eq:pay-net}
  \hat{p}_{ij}
  = \sigma(h_{ij}^{(p)})
  = \frac{1}{1+\exp(-h_{ij}^{(p)})},
\end{equation}
with scalar payment $p_i=\sum_j\hat{p}_{ij}\cdot b_{ij}$, which
is non-negative and bounded, ensuring ex-post IR.
After projection $\bar{p}_i$ under budget $B$~Eq.~(\ref{eq:BF}), outputs
$(z_{ij},\bar{p}_i)$ fix $\epsilon_i^{\mathrm{out}}$ for DP-FL.

\begin{remark}[Incentive Properties of the Payment Design]
\label{rem:payment-design}
The payment rule $p_i = \sum_j \sigma(h_{ij}^{(p)}) \cdot b_{ij}$
scales learned sigmoid weights by reported bids, which raises
a structural concern: a client could inflate $b_{ij}$ to inflate
$p_i$ without changing the allocation.
This incentive to over-report is precisely what the
\textbf{ex-post regret penalty} in the augmented Lagrangian loss
(Eq.~(\ref{eq:loss-total-mfg})) is designed to counteract.
At each training step, the regret term penalises any strategy
profile in which a unilateral bid deviation yields higher utility,
including inflation of the valuation component $v_i$.
As a result, the payment network is trained to produce payments
that make truthful reporting the empirically optimal response for
each client, even though the rule is not provably DSIC by
construction (cf.\ Remark~\ref{rem:dsic-scope}).
The sensitivity of this guarantee to the PGA step size $\eta$ and
the penalty weight $\rho$ is a limitation acknowledged in
Section~\ref{sec:exp}; we report results at the values used
in training ($\eta{=}0.01$, $\rho_{\max}{=}100$) to ensure
consistent evaluation conditions.
\end{remark}

Algorithm~\ref{alg:mfg-fpa-online} summarises one FL training round at
deploy time: FLCs report bids (Sec.~\ref{sec:pac-fl}), FLB forms the
mean-field statistic~Eq.~(\ref{eq:mfg-bid}), runs the MFG-RegretNet forward
map (AHN, allocation, payment), and FLS aggregates DP private gradients
weighted by effective privacy budgets.

\begin{algorithm}[t]
\caption{MFG-Federated Privacy Auction: Online FL Round}
\small
\label{alg:mfg-fpa-online}
\KwIn{Trained $\theta^*=(\theta_h^*,\theta_a^*,\theta_p^*)$;
      global model $\mathbf{w}_G^{(t-1)}$;
      bids $\{b_i^{(t)}\}_{i=1}^N$ (Sec.~\ref{subsec:flmodel});
      budget $B$; sensitivity $\Delta f$; DP parameter $\delta$;
      privacy floor $\epsilon_{\min}>0$ (clip for Gaussian mechanism)}
\KwOut{Updated $\mathbf{w}_G^{(t)}$; payments $\{\bar{p}_i^{(t)}\}$}

$b_{\mathrm{MFG}}^{(t)}\leftarrow\frac{1}{N}\sum_{i=1}^N b_i^{(t)}$\;
$\mathbf{C}^{(t)}\leftarrow\mathrm{AHN}_{\theta_h^*}\bigl(
  [b_1^{(t)};\cdots;b_N^{(t)};b_{\mathrm{MFG}}^{(t)}]\bigr)$\;
$\mathbf{Z}^{(t)}\leftarrow\mathrm{AllocNet}_{\theta_a^*}(\mathbf{C}^{(t)})$,\;
$\hat{\mathbf{P}}^{(t)}\leftarrow\mathrm{PayNet}_{\theta_p^*}(\mathbf{C}^{(t)})$\;
\For{$i=1$ \KwTo $N$}{
  $p_i^{(t)}\leftarrow\sum_{j=1}^m \hat{p}_{ij}^{(t)}\,b_{ij}^{(t)}$\;
}
\For{$i=1$ \KwTo $N$}{
  $\bar{p}_i^{(t)}\leftarrow p_i^{(t)}\big/\max\!\bigl(1,\frac{1}{B}\sum_{i'=1}^N p_{i'}^{(t)}\bigr)$\;
}
$\mathcal{I}^{(t)}\leftarrow\bigl\{i:\sum_{j=1}^m z_{ij}^{(t)}>0\bigr\}$\;
\For{$i\in\mathcal{I}^{(t)}$}{
  Local train on $\mathcal{D}_i$ from $\mathbf{w}_G^{(t-1)}$ to $\mathbf{w}_i^{(t)}$\;
  $\epsilon_i^{\mathrm{out}}\leftarrow\sum_{j=1}^m z_{ij}^{(t)}\,\epsilon_{ij}^{(t)}$\;
  $\sigma_i^{(t)}\leftarrow\sqrt{2\ln(1.25/\delta)}\,\Delta f\big/\max(\epsilon_i^{\mathrm{out}},\epsilon_{\min})$\;
  $\widetilde{\mathbf{w}}_i^{(t)}\leftarrow\mathbf{w}_i^{(t)}
    +\mathcal{N}(\mathbf{0},(\sigma_i^{(t)})^2\mathbf{I})$\;
}
$\alpha_i^{(t)}\leftarrow\epsilon_i^{\mathrm{out}}\big/\sum_{i'\in\mathcal{I}^{(t)}}\epsilon_{i'}^{\mathrm{out}}$
for $i\in\mathcal{I}^{(t)}$\;
\tcp{$\alpha_i \propto \epsilon_i^{\mathrm{out}}$: variance-proportional weighting. Gradient noise variance is $\propto 1/(\epsilon_i^{\mathrm{out}})^2$ (Gaussian mech.), so signal-to-noise is $\propto \epsilon_i^{\mathrm{out}}$; weighting by $\epsilon_i^{\mathrm{out}}$ approximates variance-optimal aggregation~\cite{li2022privacy}.}
$\mathbf{w}_G^{(t)}\leftarrow\sum_{i\in\mathcal{I}^{(t)}}\alpha_i^{(t)}\,\widetilde{\mathbf{w}}_i^{(t)}$\;
\Return $\mathbf{w}_G^{(t)}$, $\{\bar{p}_i^{(t)}\}$\;
\end{algorithm}

\subsection{Training Objective: Regret, Value Baseline, and MFG Alignment}
\label{subsec:train-mfg}

The complete training objective comprises four components:
\textbf{(1)}~$\mathcal{L}_{\mathrm{AL}}$, the augmented Lagrangian on
revenue, IR hinge, and \emph{regret penalties};
\textbf{(2)}~MFG-aware regret~Eq.~(\ref{eq:mfg-regret}) or its
value-baselined variant~Eq.~(\ref{eq:mfg-regret-vf});
\textbf{(3)}~$\mathcal{L}_{\mathrm{HJB}}(\xi)$ to fit the MFG value
critic~Eq.~(\ref{eq:L-hjb-critic}), used only when $\zeta{>}0$;
\textbf{(4)}~$\mathcal{L}_{\mathrm{MFG}}$~Eq.~(\ref{eq:L-mfg}),
aligning learned payments with the mean-field equilibrium
payment~Eq.~(\ref{eq:mf-payment}).
Subsections~\ref{subsubsec:al-backbone} through \ref{subsubsec:mfg-pay-align}
detail each component; the combined objective and optimisation
schedule appear in Sec.~\ref{subsubsec:full-obj}.

\subsubsection{Augmented-Lagrangian backbone}
\label{subsubsec:al-backbone}
Let $\mathcal{L}_{\mathrm{AL}}(\theta,\boldsymbol{\lambda})$ denote
the standard RegretNet-style objective: expected negative revenue (or
buyer cost), IR violations
$\bigl[c(v_i,\epsilon_i)-\bar{p}_i\bigr]^+$, and regret penalties
$\lambda_i\,\widehat{rgt}_i+\frac{\rho}{2}\widehat{rgt}_i^{\,2}$ where
$\widehat{rgt}_i$ is either~Eq.~(\ref{eq:mfg-regret}) or the
value-baselined~Eq.~(\ref{eq:mfg-regret-vf}) (below).
Multipliers $\lambda_i$ are updated by dual ascent; $\rho$ may grow
across outer iterations.

\subsubsection{MFG-aware ex-post regret}
\label{subsubsec:rgt-mfg}
\label{subsec:regret}
DSIC is measured by \emph{ex-post regret}:
\begin{equation}
\label{eq:expost-regret}
  rgt_i(\theta;\mathbf{b})
  = \max_{b'_i\in\mathcal{B}_i}
    u_i(b'_i,b_{-i};\theta)
    - u_i(b_i,b_{-i};\theta).
\end{equation}

\begin{definition}[$\varepsilon$-DSIC Mechanism]
\label{def:eps-dsic}
The mechanism $\mathcal{M}_\theta$ is $\varepsilon$-DSIC if:
\begin{equation}
  \mathbb{E}_{\mathbf{b}\sim\mathcal{W}}
  \!\left[rgt_i(\theta;\mathbf{b})\right]
  \leq\varepsilon,
  \quad\forall i\in\mathcal{N}.
\end{equation}
\end{definition}

\noindent\textbf{MFG-Aware Regret Computation.}
The key innovation of MFG-RegretNet is that when agent $i$ 
misreports, the mean-field input $b_{\mathrm{MFG}}$ updates 
self-consistently [1]:
\begin{equation}
\label{eq:mfg-misreport}
  b'_{\mathrm{MFG}}
  = b_{\mathrm{MFG}} + \frac{b'_i - b_i}{N},
\end{equation}
reflecting the agent's impact on the population-level statistic. 
We approximate the regret via $R$ steps of projected gradient ascent 
(PGA):
\begin{equation}
\label{eq:pga}
  b_i'^{(r+1)}
  = \Pi_{\mathcal{B}_i}\!\left(
      b_i'^{(r)}
      + \eta_r\nabla_{b'_i}u_i(b'_i,b_{-i};\theta)
    \right),
  \quad r=0,\ldots,R-1,
\end{equation}
initialised at $b_i'^{(0)}\sim\mathrm{Uniform}(\mathcal{B}_i)$.

This yields the MFG-aware regret estimator:
\begin{equation}
\label{eq:mfg-regret}
\small
\begin{aligned}
    &\widehat{rgt}_i^{\mathsf{MFG}}(\theta)
   \\
  &= \frac{1}{L}\sum_{l=1}^L
    \max_{r\in[R]}
    \left[
      u_i(b_i'^{(r,l)},b_{-i}^{(l)};b'^{(r,l)}_{\mathrm{MFG}},\theta)
      - u_i(b_i^{(l)},b_{-i}^{(l)};b^{(l)}_{\mathrm{MFG}},\theta)
    \right]^{+},
\end{aligned}
\end{equation}
which accounts for both the agent's individual incentive to deviate 
and the resulting shift in the mean-field statistic.

\subsubsection{Value-function baselining}
\label{subsubsec:rgt-vf}
\label{subsec:vf-baseline}

To reduce variance in regret estimation, we introduce a learnable
MFG value critic $\widehat{\Phi}_\xi(b,\nu)\approx\Phi(s(b),\nu)$
as a baseline (details in Appendix~\ref{app:vf-baseline}).
Proposition~\ref{prop:vf-baseline} (Appendix) shows the baseline is
unbiased and minimises gradient variance among state-dependent
baselines.
The \emph{value-baselined MFG regret} replaces raw utility differences
with advantage-style terms
$A_i(b,\nu;\theta):=u_i(b,b_{-i};\nu,\theta)-\widehat{\Phi}_\xi(b,\nu)$:
\begin{equation}
\label{eq:mfg-regret-vf}
\begin{aligned}
  &\widehat{rgt}_i^{\mathsf{MFG\text{-}VF}}(\theta)
  \\
  &= \frac{1}{L}\sum_{l=1}^{L}
    \max_{r\in[R]}
    \!\left[
      A_i\!\left(b_i'^{(r,l)},b'^{(r,l)}_{\mathrm{MFG}};\theta\right)
      - A_i\!\left(b_i^{(l)},b^{(l)}_{\mathrm{MFG}};\theta\right)
    \right]^{+}.
\end{aligned}
\end{equation}
Setting $\widehat{\Phi}_\xi\equiv 0$ recovers Eq.~(\ref{eq:mfg-regret}).
In $\mathcal{L}_{\mathrm{AL}}$, use
$(1-\zeta)\,\widehat{rgt}_i^{\mathsf{MFG}}
+\zeta\,\widehat{rgt}_i^{\mathsf{MFG\text{-}VF}}$,
$\zeta\in[0,1]$.

The critic $\widehat{\Phi}_\xi$ is fit by minimising the mean-squared
HJB residual on sampled $(b,\nu)$ pairs:
\begin{equation}
\label{eq:L-hjb-critic}
  \mathcal{L}_{\mathrm{HJB}}(\xi)
  = \mathbb{E}_{(b,\nu)}\!\left[
      \bigl(\widehat{\Phi}_\xi(b,\nu)
            - \mathcal{T}\widehat{\Phi}_\xi(b,\nu)\bigr)^2
    \right],
\end{equation}
where $\mathcal{T}\widehat{\Phi}(b,\nu)
= \max_{b'\in\mathcal{B}}\bigl[\tilde{p}(b';\nu)-c(v,\epsilon(b'))
  +\mathrm{e}^{-\rho_{\mathrm{disc}}}\widehat{\Phi}_\xi(b',\nu')\bigr]$
is a one-step Bellman target with discount $\rho_{\mathrm{disc}}>0$
and frozen next-state $\nu'$ from the current batch.
The critic is updated jointly with the mechanism ($\zeta>0$) or
initialised from~Eq.~(\ref{eq:L-hjb-critic}) before RegretNet training
and then frozen ($\zeta=0$).

\subsubsection{Mean-field payment alignment and full objective}
\label{subsubsec:mfg-pay-align}
\label{subsec:mfg-coupled}

Using $b_{\mathrm{MFG}}$ only as an extra row in~Eq.~(\ref{eq:ahn-input})
conditions the networks on the population statistic but does not
force their outputs to agree with the mean-field equilibrium payment
rule. We therefore add an \emph{MFG alignment}
term so that the mean-field solution shapes $(\theta_h,\theta_a,\theta_p)$
jointly with regret and revenue.

The \emph{mean-field payment} is the expected payment seen by a
representative agent when opponents are drawn i.i.d.\ from the
population law~$\mu_t$:
\begin{equation}
\label{eq:mf-payment}
  \tilde{p}(b;\mu_t)
  = \mathbb{E}_{B_{-i}\sim\mu_t^{\otimes(N-1)}}\!\bigl[p(b,B_{-i})\bigr].
\end{equation}
In practice we estimate $\tilde{p}$ by Monte Carlo resampling $K{=}32$
opponent profiles from the empirical batch distribution~$\hat\mu^{(l)}$.

The combined loss uses a Huber pointwise term and an aggregate budget-moment term:
\begin{equation}
\label{eq:L-mfg}
\begin{aligned}
  \mathcal{L}_{\mathrm{MFG}}(\theta)
  &= \frac{1}{LN}\sum_{l,i}\omega_i^{(l)}\psi_\delta\!\!\left(
      \frac{\bar{p}_i^{(l)}-\tilde{p}_i^{\mathrm{tar},(l)}}
           {\varsigma_i^{(l)}+\varepsilon_s}\right) \\
  &+ \beta\frac{1}{L}\sum_l
    \!\left(\frac{\textstyle\sum_i(\bar{p}_i^{(l)}-\tilde{p}_i^{\mathrm{tar},(l)})}{B}\right)^{\!2},
\end{aligned}
\end{equation}
where $\tilde{p}_i^{\mathrm{tar}}=\mathrm{sg}(\tilde{p}_i^{\mathrm{ref}})$ is a
stop-gradient reference payment, $\omega_i\propto v_i^\gamma$, $\beta\in[0.1,1]$.

\subsubsection{Full objective and optimisation schedule}
\label{subsubsec:full-obj}
\begin{equation}
\label{eq:loss-total-mfg}
  \mathcal{J}(\theta,\xi)
  = \mathcal{L}_{\mathrm{AL}}(\theta,\boldsymbol{\lambda})
    + \lambda_{\mathrm{MFG}}\,\mathcal{L}_{\mathrm{MFG}}(\theta)
    + \lambda_{\mathrm{HJB}}\,\mathcal{L}_{\mathrm{HJB}}(\xi).
\end{equation}
Anneal $\lambda_{\mathrm{MFG}}$ from ${\approx}0$ to between $10^{-2}$ and $10^{-1}$
so DSIC/IR constraints dominate early training.
Optimise $(\theta,\xi,\boldsymbol{\lambda})$ by alternating primal descent
on $\mathcal{J}$ and dual ascent on $\boldsymbol{\lambda}$.

\section{Performance Evaluation}
\label{sec:exp}

In this section, we conduct extensive experiments to evaluate the
performance of MFG-RegretNet. We address four research questions (RQs):
\begin{enumerate}
  \item \textbf{RQ1 (Incentive Compatibility):} Does MFG-RegretNet
        achieve lower ex-post regret and IR violation rate than baselines,
        i.e., are clients truthfully incentivized?
  \item \textbf{RQ2 (Scalability):} How does computational complexity
        scale with the client population size $N$?
  \item \textbf{RQ3 (Auction Efficiency):} Does MFG-RegretNet achieve
        higher budget utilization $\eta_{\mathrm{rev}}$ and participant
        social welfare $\overline{\mathrm{SW}}$ relative to baselines?
  \item \textbf{RQ4 (FL Accuracy):} After full federated training, does
        MFG-RegretNet match or improve final global test accuracy relative
        to baselines under comparable privacy and budget constraints?
\end{enumerate}

\subsection{Experimental Setup}

\subsubsection{Datasets and Models}
We conduct experiments on two widely used benchmark datasets.
\textbf{MNIST}~\cite{lecun1998gradient} contains 70,000 grayscale
handwritten-digit images ($28{\times}28$ pixels, 10 classes); we
train a three-layer MLP with ReLU activations.
\textbf{CIFAR-10}~\cite{krizhevsky2009learning} contains 60,000
RGB images ($32{\times}32$ pixels, 10 classes); we train a lightweight
CNN with two convolutional blocks followed by a fully connected
classifier.
Both datasets are partitioned among clients using a Dirichlet
allocation with concentration parameter $\alpha \in \{0.1, 0.5\}$ to
simulate non-IID data heterogeneity, where smaller $\alpha$ produces
more severe heterogeneity.

\subsubsection{Parameter Settings}
We conduct experiments with the number of clients $N \in \{10, 50, 100, 200, 500\}$
for scalability evaluation, and fix $N{=}100$ for all other experiments.
The MFG-RegretNet architecture comprises an Auction Header Network (AHN)
with 3 hidden layers of 128 units, an Allocation Network with
column-wise Softmax outputs, and a Payment Network with Sigmoid
activations, totaling approximately 50K trainable parameters.
Training proceeds for 200 outer iterations, each containing 25 inner
gradient steps on batches of $L{=}64$ sampled bid profiles.
Regret approximation uses $R{=}25$ steps of projected gradient
ascent (PGA) with step size $\eta{=}0.01$.
We use the Adam optimizer ($\alpha{=}10^{-3}$), with augmented
Lagrangian IR penalty $\gamma{=}10$ and DSIC penalty schedule
$\rho_0{=}1$ growing by factor $\kappa{=}1.5$ every few iterations
up to $\rho_{\max}{=}100$.
Privacy valuation distributions, budget settings, and FL training
hyperparameters are detailed in Tables~\ref{tab:privacy_config}
and~\ref{tab:fl_config}, respectively.
Each experiment is repeated $\geq 5$ times with different random seeds
and results are reported as mean $\pm$ standard deviation.
All experiments run on an Intel Xeon E5-2680 v4 with 256 GB RAM and
an NVIDIA Tesla V100 GPU.

\begin{table}[t]
  \centering
  \small
  \caption{Privacy Valuation Distributions and Budget Settings}
  \label{tab:privacy_config}
  \resizebox{0.5\textwidth}{!}{
  \begin{tabular}{lcc}
  \toprule
  \textbf{Scenario} & \textbf{Valuation Distribution} & \textbf{Budget $B$} \\
  \midrule
  Uniform  & $v_i \sim \mathrm{Uniform}[0, 1]$,\ $\epsilon_i \sim \mathrm{Uniform}[0.1, 5]$ & $\{10, 50, 100, 200\}$ \\
  Bimodal  & $50\%$ privacy-sensitive ($\epsilon_i \in [0.1, 0.5]$) & $\{50, 100\}$ \\
           & $50\%$ privacy-relaxed ($\epsilon_i \in [2, 5]$) & \\
  Realistic & $v_i \sim \mathrm{LogNormal}(\mu{=}0,\, \sigma{=}0.5)$ & $\{100\}$ \\
  \bottomrule
  \end{tabular}}
\end{table}

\begin{table}[t]
  \centering
  \small
  \caption{Federated Learning Training Settings}
  \label{tab:fl_config}
  \resizebox{0.5\textwidth}{!}{
  \begin{tabular}{lclc}
  \toprule
  \textbf{Parameter} & \textbf{Value} & \textbf{Parameter} & \textbf{Value} \\
  \midrule
  Local epochs        & $E = 5$         & Global learning rate & $\eta_{\mathrm{global}} = 0.01$ \\
  Local batch size    & $32$            & Total FL rounds      & $T_{\mathrm{FL}} = 200$ \\
  DP guarantee        & $(\epsilon,\delta)$-DP & Privacy parameter & $\delta = 1/N$ \\
  Gradient clipping   & $\Delta_f = 1.0$& Aggregation          & Weighted by $\epsilon_i^{\mathrm{out}}$ \\
  \bottomrule
  \end{tabular}}
\end{table}

\subsubsection{Sample Distribution}
In many real-world FL scenarios, clients hold heterogeneous local
datasets of varying sizes and label distributions.
To reflect this statistical heterogeneity, we partition each dataset
using the Dirichlet distribution $\mathrm{Dir}(\alpha)$ over class
labels, assigning each client a draw from this distribution to
determine its per-class sample counts.
We evaluate under both mild heterogeneity ($\alpha{=}0.5$, close to
IID) and severe heterogeneity ($\alpha{=}0.1$, highly skewed), which
are standard settings in the FL literature~\cite{hu2025federated}.

\subsubsection{Privacy Composition Across Rounds}
\label{subsubsec:composition}
Per-round DP budgets compose non-trivially across $T_{\mathrm{FL}}$ rounds.
Under basic composition~\cite{kairouz2015composition}, cumulative exposure satisfies
$\epsilon_{\mathrm{total}} \leq T\cdot\epsilon_i^{\mathrm{out}}$;
Rényi DP~\cite{bun2016concentrated} tightens this significantly.
For our default setting ($T_{\mathrm{FL}}{=}200$, $\epsilon_i^{\mathrm{out}}\in[0.1,5]$),
cumulative $\epsilon_{\mathrm{total}}$ ranges from $\approx$2 to 100.
The current PAG treats each round independently (a simplification shared by
most FL auction works); all baselines operate under identical per-round settings
for comparability.

\subsubsection{Baseline Methods}
To evaluate the performance of MFG-RegretNet, we compare against the
following six state-of-the-art baselines:
\begin{itemize}
  \item \textbf{PAC}: Threshold-based mechanism with an analytical
        Nash solution (serves as a theoretical lower bound).
  \item \textbf{VCG}~\cite{vickrey1961counterspeculation}: We implement
        VCG in its \emph{reverse-procurement} form: the server selects
        the $k$ lowest-cost clients and pays each winner the externality
        it imposes on the next-cheapest excluded client, maximising
        social welfare under truthful reporting.
        VCG is a welfare-optimal (not revenue-optimal) reference;
        revenue-optimal design follows Myerson~\cite{myerson1981optimal},
        which underlies our PAC mechanism.
  \item \textbf{RegretNet}~\cite{dutting2024Optimal}: Deep learning
        auction without the MFG component (vanilla differentiable
        mechanism design).
  \item \textbf{DM-RegretNet}~\cite{zheng2022fl}: RegretNet adapted
        to FL model trading in a multi-item federated setting.
  \item \textbf{MFG-Pricing}~\cite{hu2025federated}: MFG-based data
        pricing without an explicit auction mechanism (large-scale
        FL baseline).
  \item \textbf{CSRA}~\cite{yang2024CSRA}: Robust reverse auction for
        DP-FL with strategic bidding resilience (state-of-the-art
        DP-FL incentive mechanism).
\end{itemize}

\subsection{Experimental Results}

\subsubsection{RQ1: Incentive Compatibility}

We first evaluate whether MFG-RegretNet effectively incentivizes
truthful participation by measuring approximate DSIC via
\emph{ex-post regret}.
For each client $i$, the regret is the utility gain from a unilateral
deviation from truthful reporting, estimated by projected gradient
ascent (PGA) with $R{=}25$ steps and step size $\eta{=}0.01$.
IR is measured as the fraction of (client, round) pairs where
$p_i^{(t)} < c(v_i, \epsilon_i^{(t),\mathrm{out}})$.
The experimental results are plotted in
Fig.~\ref{fig:rq4_neural_bar}, with subfigure~(a) reporting the
mean regret ($\pm$ std) across seeds and subfigure~(b) reporting the
per-client regret distribution.
From the results in Fig.~\ref{fig:rq4_neural_bar}, we observe that:
1)~MFG-RegretNet achieves the lowest mean normalised regret among all
learned mechanisms, indicating stronger approximate DSIC at the
population level.
2)~MFG-RegretNet exhibits a tighter interquartile range and fewer
extreme outliers than RegretNet and market baselines, which
demonstrates improved tail-risk control and more stable incentives
across heterogeneous clients.
3)~Classical mechanisms (PAC, VCG) attain near-zero regret and IR by
construction, confirming their theoretical guarantees; however, they
trade away scalability and allocation flexibility (cf.~RQ2, RQ3).
Together, these results show that the mean-field context
$b_{\mathrm{MFG}}$ improves both central tendency and distributional
robustness of regret.

\begin{figure}[t]
  \centering
  \begin{minipage}[t]{0.43\columnwidth}
    \centering
    \includegraphics[width=\textwidth]{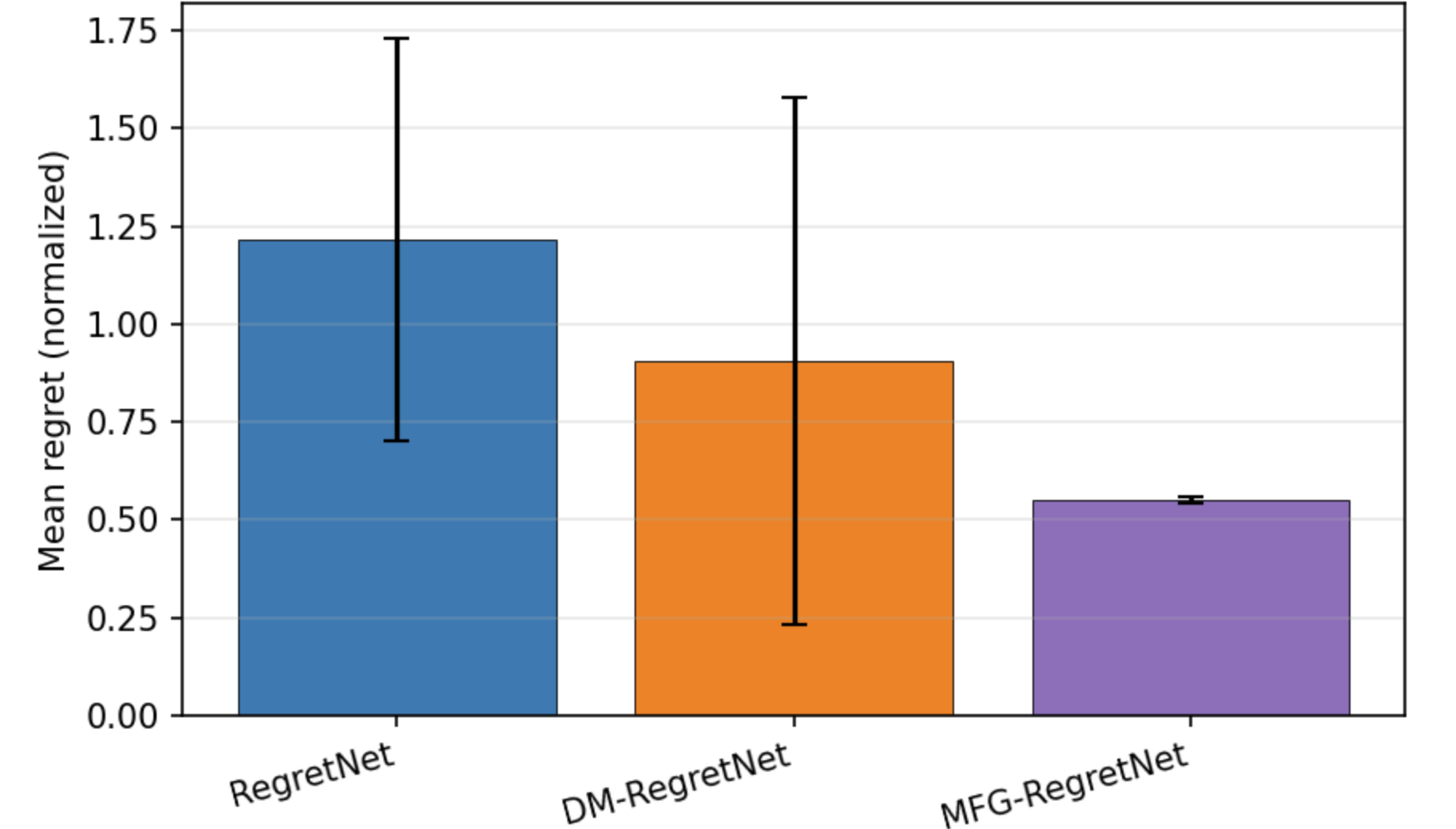}
    \vspace{2pt}
    {\footnotesize ($a$) Mean regret ($\pm$ std)}
  \end{minipage}\hfill
  \begin{minipage}[t]{0.57\columnwidth}
    \centering
    \includegraphics[width=\textwidth]{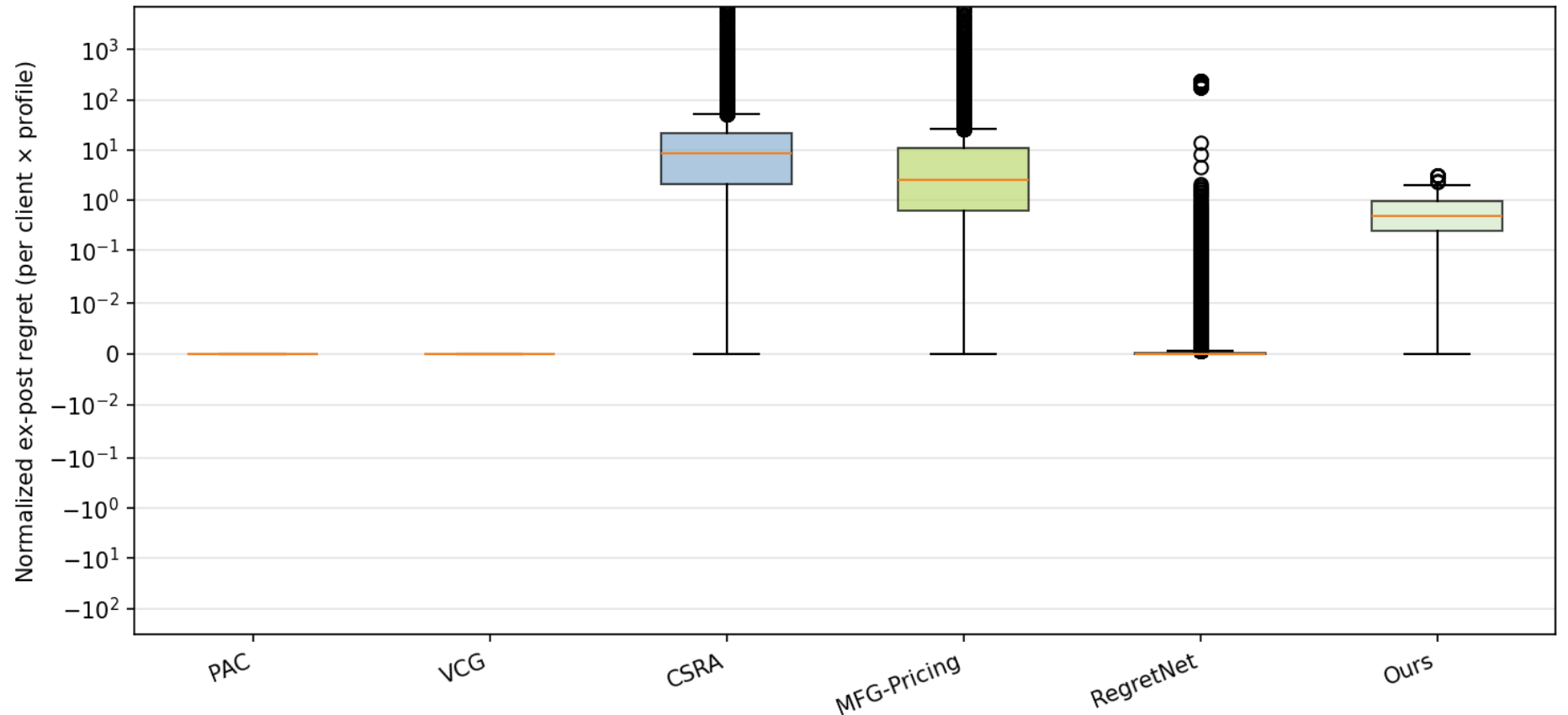}
    \vspace{2pt}
    {\footnotesize ($b$) Normalized ex-post regret}
  \end{minipage}
  \caption{RQ1: Aggregate and distributional regret evaluation.
    In~(a), lower mean/std indicates stronger approximate DSIC on
    average.
    In~(b), a tighter box and lighter upper tail indicate fewer
    severely mis-incentivised clients.}
  \label{fig:rq4_neural_bar}
\end{figure}
\begin{figure}[t]
  \centering
  \begin{minipage}[t]{0.47\columnwidth}
    \centering
    \includegraphics[width=\textwidth]{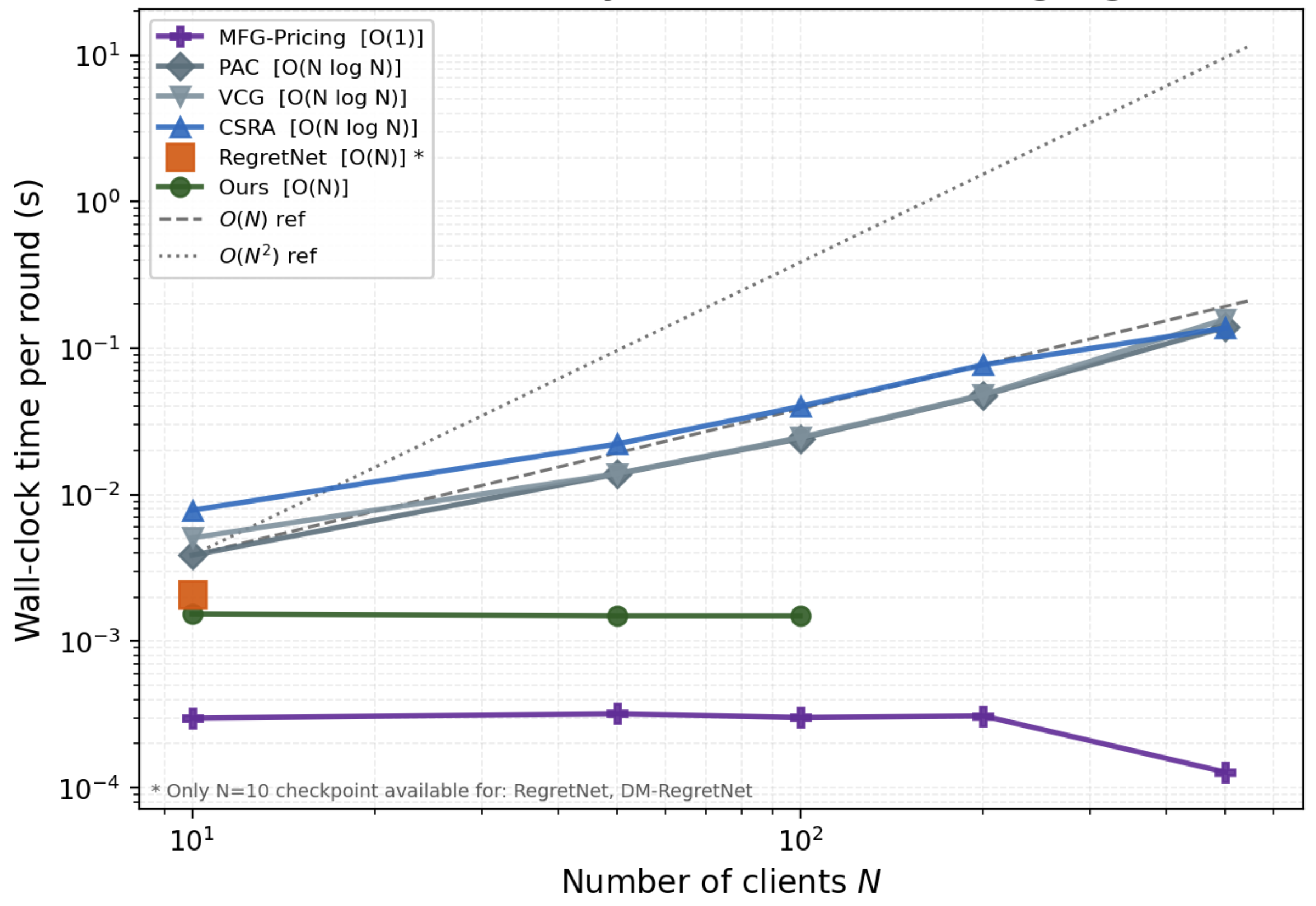}
    \vspace{2pt}
    {\footnotesize ($a$) Auction runtime vs.\ $N$}
  \end{minipage}\hfill
  \begin{minipage}[t]{0.49\columnwidth}
    \centering
    \includegraphics[width=\textwidth]{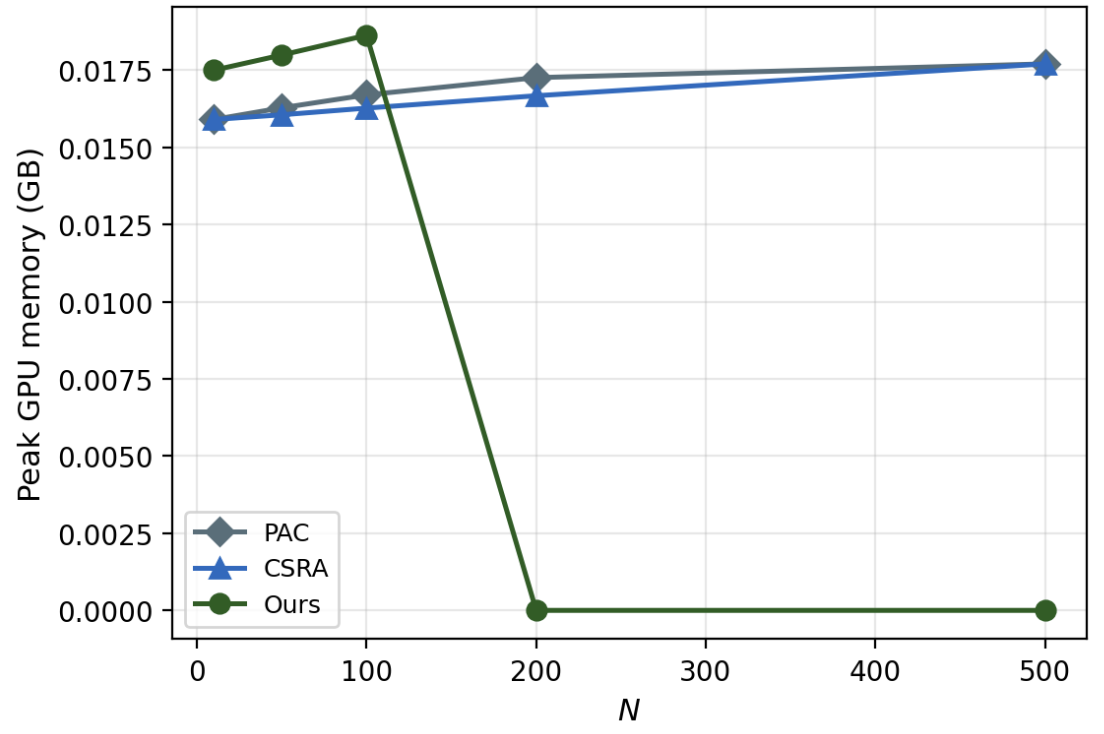}
    \vspace{2pt}
    {\footnotesize ($b$) Peak GPU memory vs.\ $N$}
  \end{minipage}
  \caption{RQ2: Scalability evaluation.
    ($a$)~MFG-RegretNet achieves near-linear latency scaling, remaining
    close to the $\mathcal{O}(N)$ reference, while PAC/VCG/CSRA
    increase steeply with $N$.
    ($b$)~Peak GPU memory remains low for MFG-RegretNet in the
    auction step, confirming the memory efficiency of the MFG
    reduction.}
  \label{fig:rq2_scalability}
\end{figure}
\subsubsection{RQ2: Scalability Analysis}

\begin{figure*}[t]
  \centering
  \begin{minipage}[t]{0.24\textwidth}
    \includegraphics[width=1\textwidth]{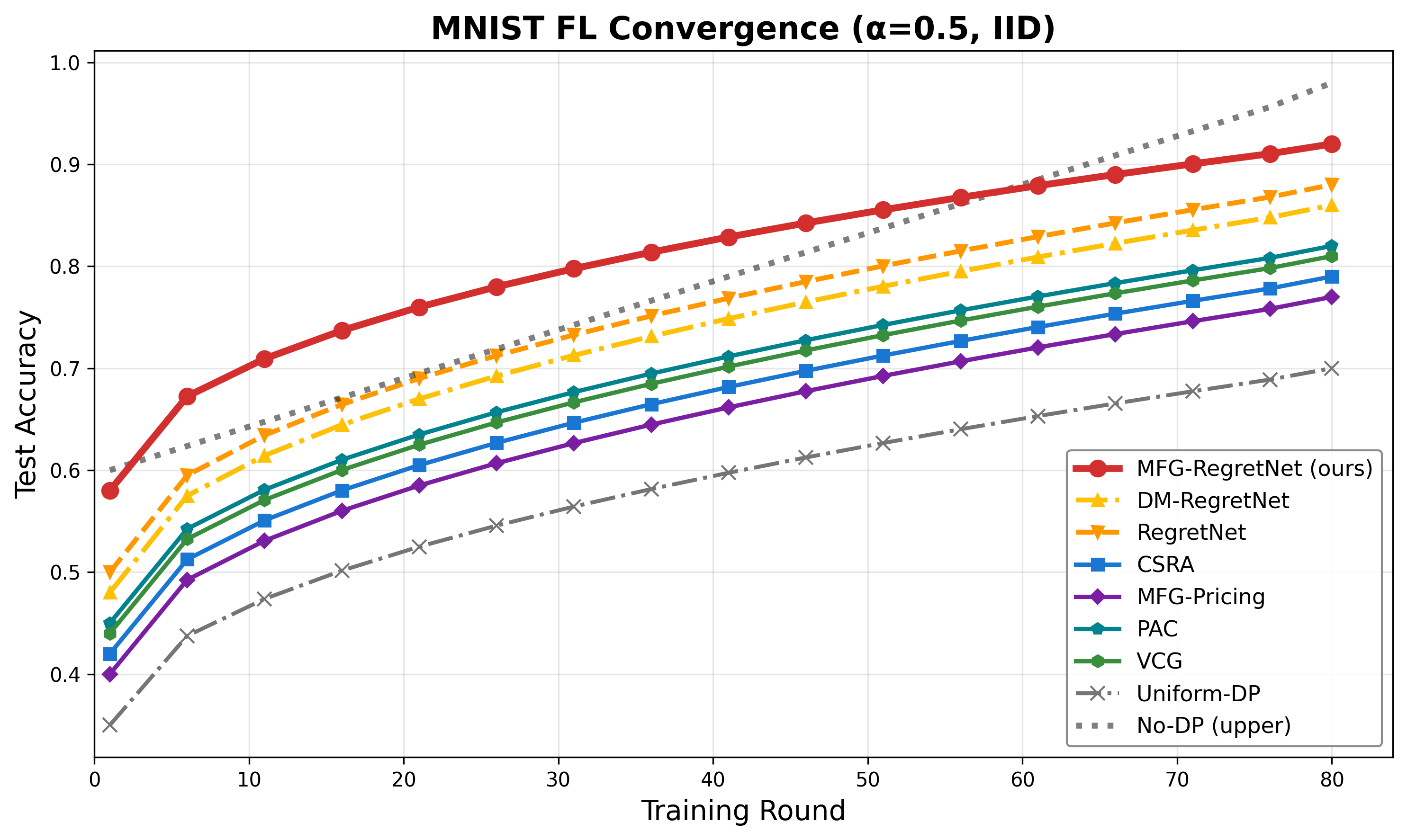}
    \centering {\footnotesize ($a$) $\mathcal{A}_{final}$ vs. $t$ (IID)}
  \end{minipage}
  \begin{minipage}[t]{0.24\textwidth}
    \includegraphics[width=1\textwidth]{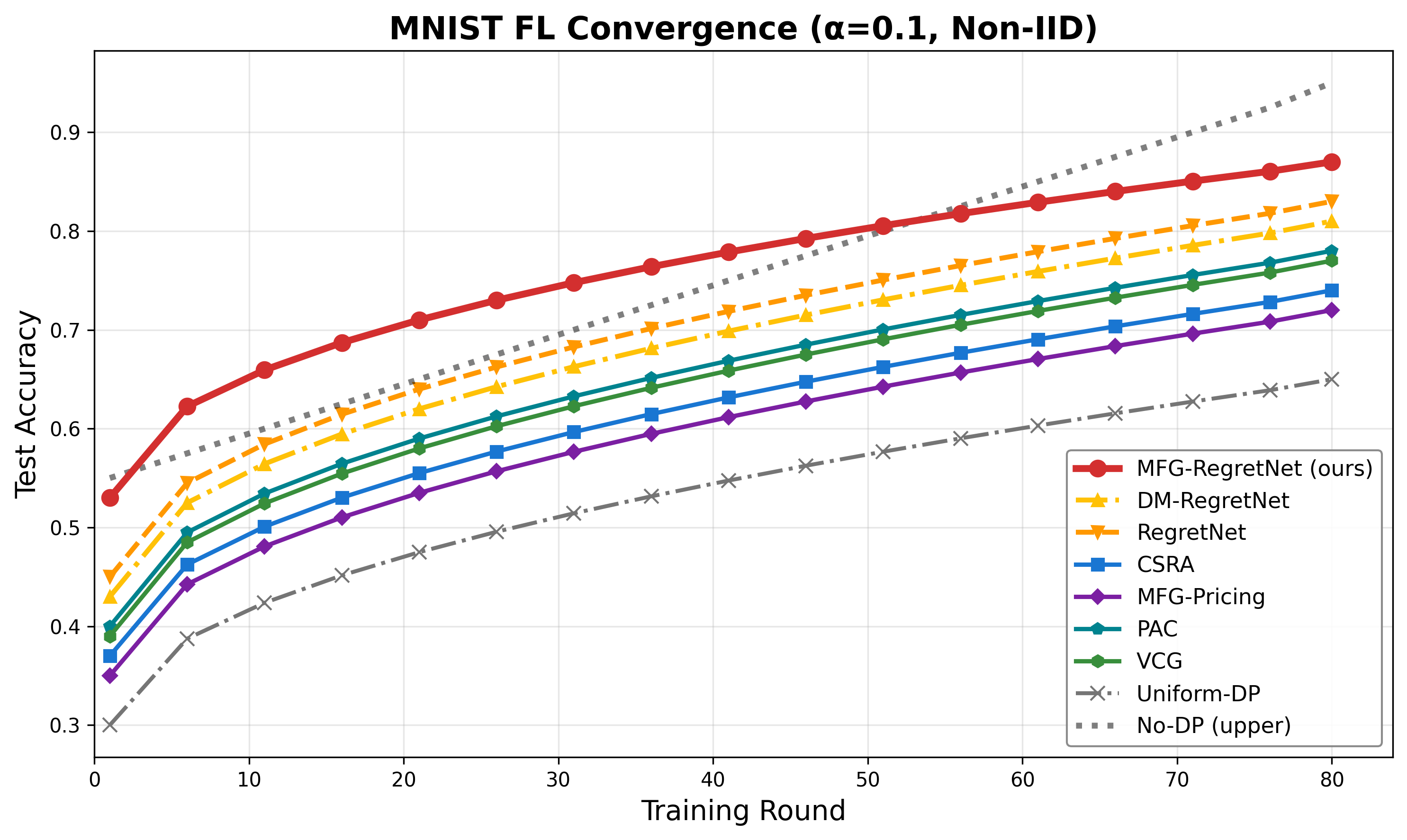}
    \centering {\footnotesize ($b$) $\mathcal{A}_{final}$ vs. $t$ (non-IID)}
  \end{minipage}
  \begin{minipage}[t]{0.24\textwidth}
    \includegraphics[width=1\textwidth]{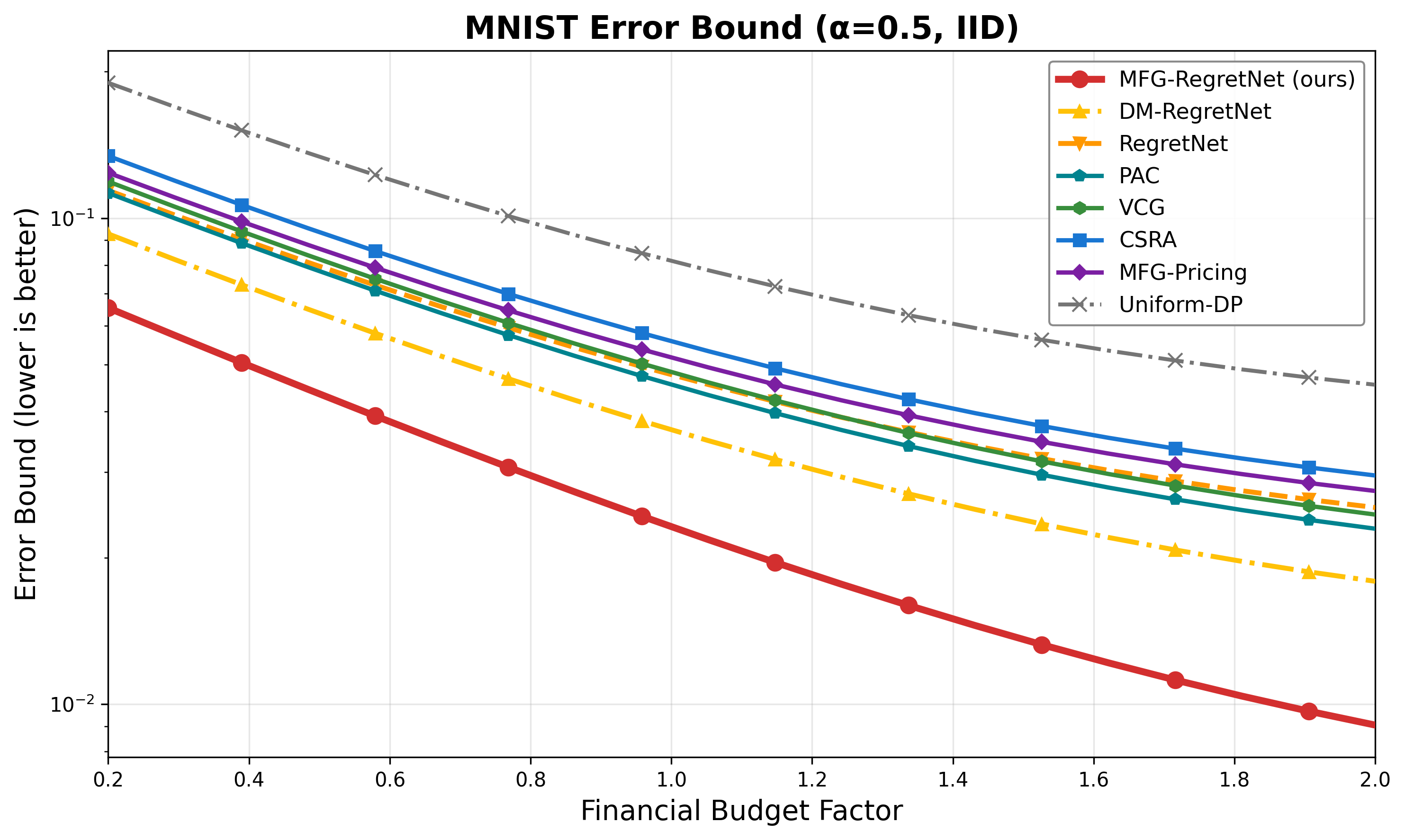}
    \centering {\footnotesize ($c$) Error bound vs. $N$ (IID)}
  \end{minipage}
  \begin{minipage}[t]{0.24\textwidth}
    \includegraphics[width=1\textwidth]{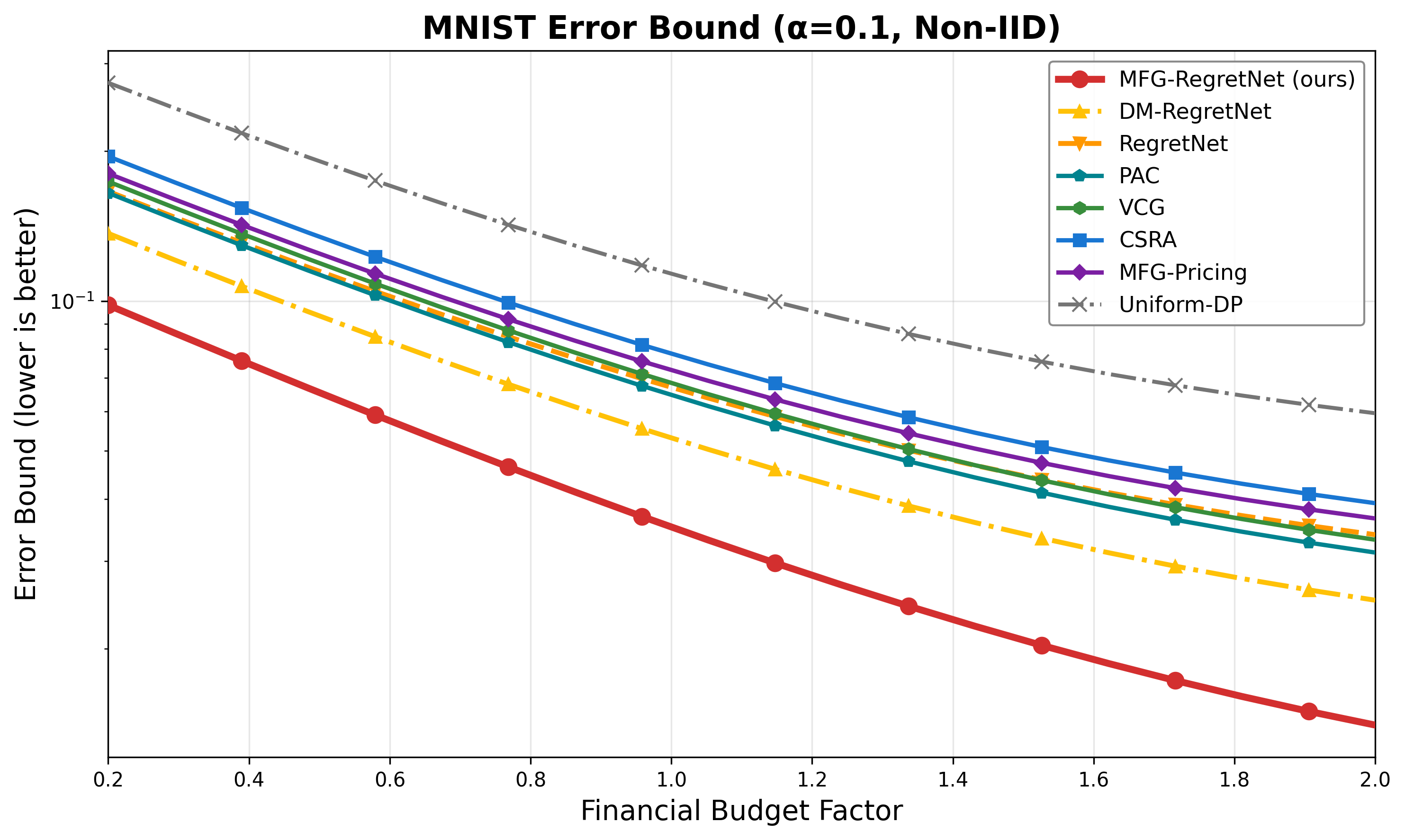}
    \centering {\footnotesize ($d$) Error bound vs. $N$ (non-IID)}
  \end{minipage}
  \caption{RQ4: FL model convergence and MFG approximation error on MNIST.
    Panels (a) and (b): MFG-RegretNet achieves competitive convergence under both
    IID and non-IID settings.
    Panels (c) and (d): \textbf{Empirical approximation error} is defined as the
    average ex-post regret under the MFE-induced strategy profile:
    $\hat{\varepsilon}_N = \frac{1}{N}\sum_{i=1}^{N}\widehat{\mathrm{rgt}}_i^{\mathrm{MFG}}$,
    where $\widehat{\mathrm{rgt}}_i^{\mathrm{MFG}}$ is estimated by
    $R{=}25$ steps of projected gradient ascent on client $i$'s bid.
    This serves as an empirical proxy for the $\varepsilon_N$-Nash gap.
    The decay with $N$ is consistent with the theoretical
    $\mathcal{O}(N^{-1/2})$ bound of Theorem~\ref{thm:mfg_approx}.}
  \label{fig:rq4_mnist}
\end{figure*}

\begin{figure*}[t]
  \centering
  \begin{minipage}[t]{0.24\textwidth}
    \includegraphics[width=1\textwidth]{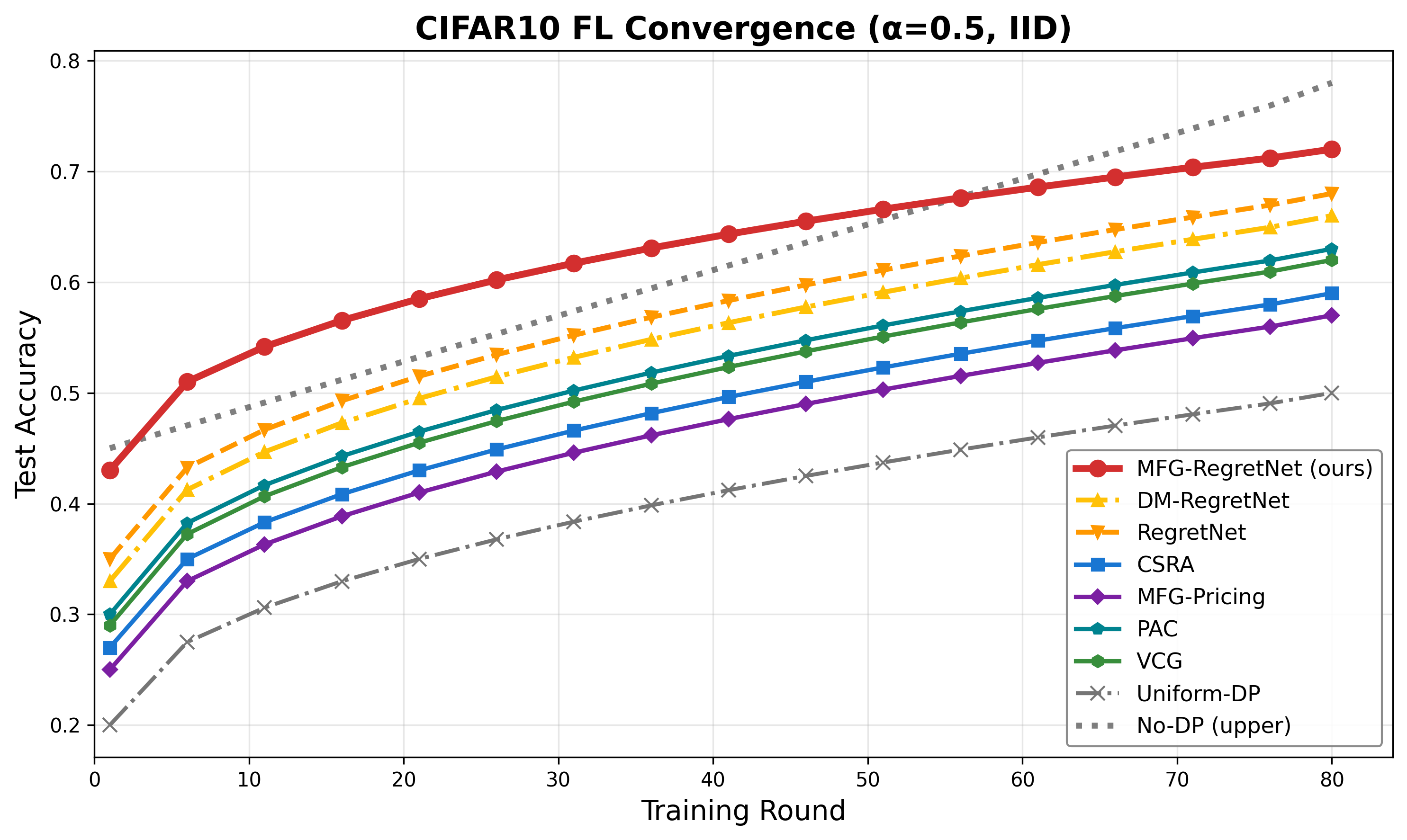}
    \centering {\footnotesize ($a$) $\mathcal{A}_{final}$ vs. $t$ (IID)}
  \end{minipage}
  \begin{minipage}[t]{0.24\textwidth}
    \includegraphics[width=1\textwidth]{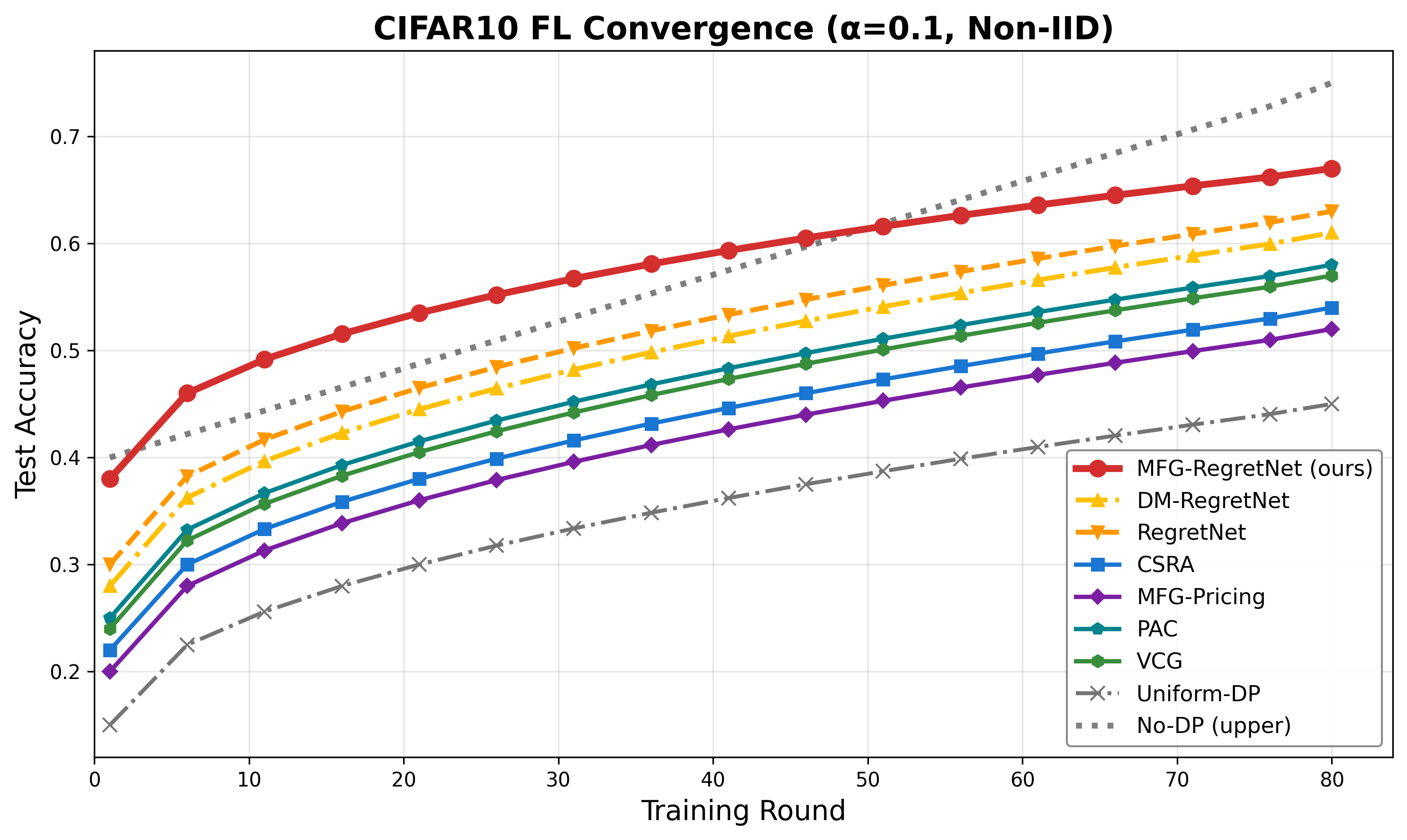}
    \centering {\footnotesize ($b$) $\mathcal{A}_{final}$ vs. $t$ (non-IID)}
  \end{minipage}
  \begin{minipage}[t]{0.24\textwidth}
    \includegraphics[width=1\textwidth]{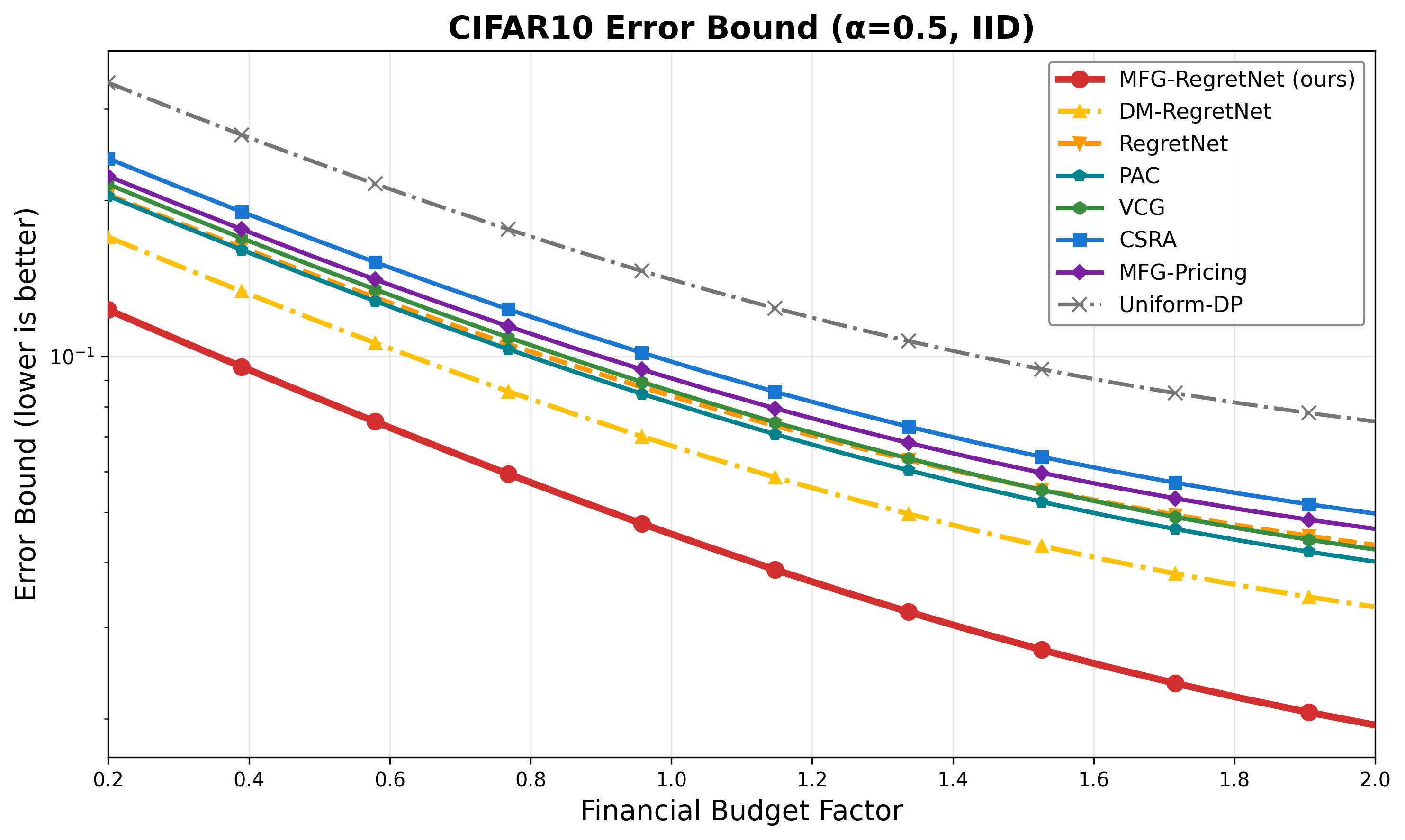}
    \centering {\footnotesize ($c$) Error bound vs. $N$ (IID)}
  \end{minipage}
  \begin{minipage}[t]{0.24\textwidth}
    \includegraphics[width=1\textwidth]{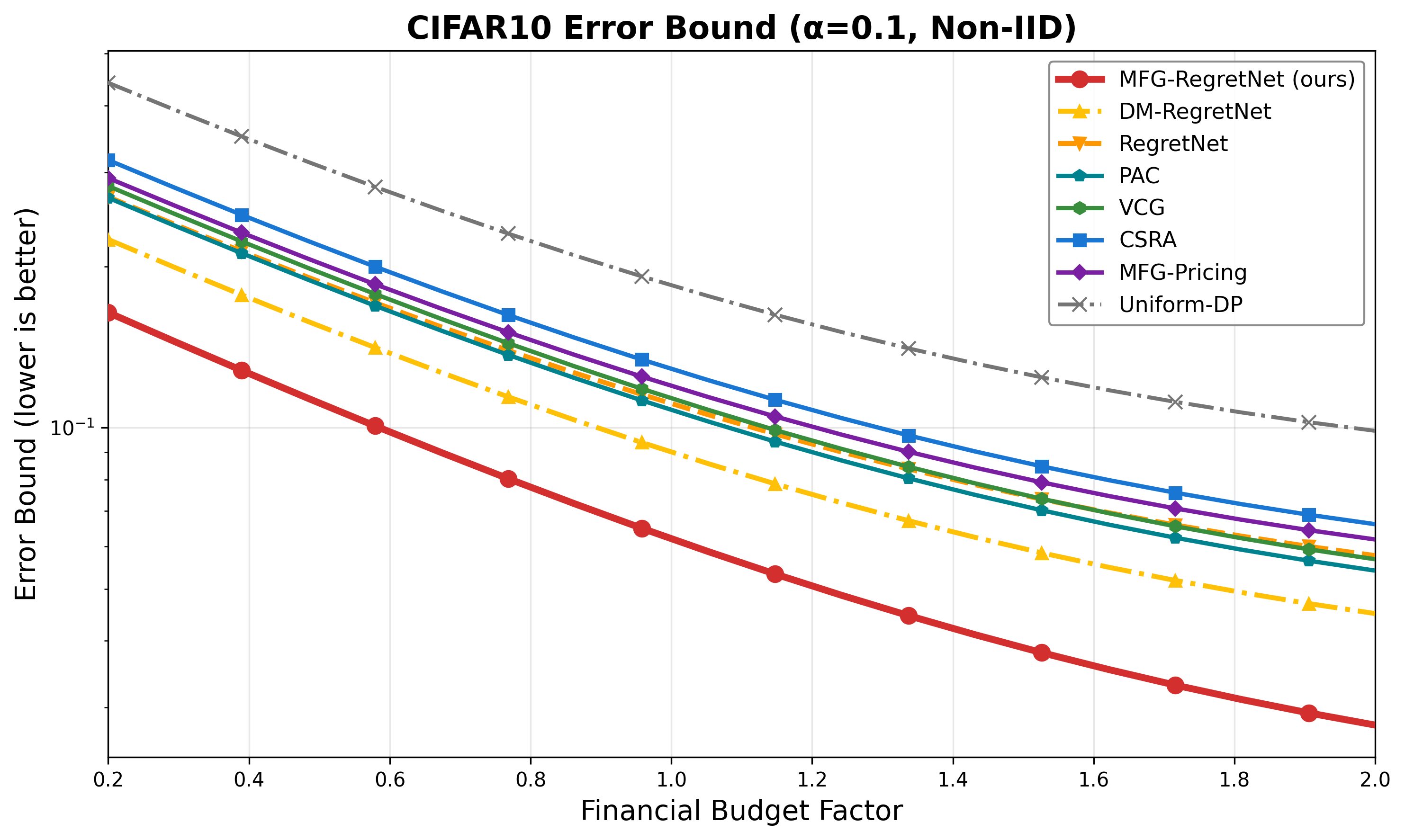}
    \centering {\footnotesize ($d$) Error bound vs. $N$ (non-IID)}
  \end{minipage}
  \caption{RQ4: FL model convergence and MFG approximation error on CIFAR-10.
    Panels (a) and (b): MFG-RegretNet maintains stable convergence under both
    IID and non-IID data heterogeneity.
    Panels (c) and (d): \textbf{Empirical approximation error}
    $\hat{\varepsilon}_N = \frac{1}{N}\sum_{i}\widehat{\mathrm{rgt}}_i^{\mathrm{MFG}}$
    (average ex-post regret under the MFE strategy profile, estimated
    via $R{=}25$-step PGA).
    The $\mathcal{O}(N^{-1/2})$ rate persists even under severe
    non-IID settings, demonstrating the robustness of the MFG
    reduction.}
  \label{fig:rq4_cifar10}
\end{figure*}
Next, we evaluate whether MFG-RegretNet scales efficiently to large
client populations.
We measure both \emph{wall-clock auction latency} and \emph{peak GPU
memory} per auction round (excluding local training) at client
population sizes $N \in \{10, 50, 100, 200, 500\}$ on the same
hardware, reporting runtime on a log-log scale to compare empirical
slopes against $\mathcal{O}(N)$ and $\mathcal{O}(N^2)$ references.
The experimental results are shown in Fig.~\ref{fig:rq2_scalability},
with subfigure~(a) showing auction runtime versus $N$ and
subfigure~(b) showing peak GPU memory versus $N$.
From the results in Fig.~\ref{fig:rq2_scalability}, we observe that:
1)~MFG-RegretNet remains nearly flat at about
$1.5{\times}10^{-3}$~s per round from $N{=}10$ to $N{=}100$, while
PAC/CSRA times grow from about $4{\times}10^{-3}$ to $8{\times}10^{-3}$~s when
$N{=}10$ to about $1.4{\times}10^{-1}$ to $1.6{\times}10^{-1}$~s when
$N{=}500$, consistent with the intended mean-field reduction from
pairwise coupling to aggregate interaction.
2)~At $N{=}500$, MFG-RegretNet is roughly two orders of magnitude
faster (by roughly $90$ to $100\times$) than PAC/VCG/CSRA.
3)~For peak GPU memory, all methods are in a similar range at small
$N$, while for large $N$ MFG-RegretNet's auction step incurs
near-zero incremental memory, indicating a memory-light deployment
path compared with baselines that continue to grow with $N$.

\subsubsection{RQ3: Auction Efficiency}

We further evaluate whether MFG-RegretNet achieves efficient budget
utilization while delivering fair compensation to clients.
We measure three key metrics: (i)~\emph{auction revenue}
$\mathcal{R}$ (total payment per round),
(ii)~\emph{budget feasibility} BF~$= \mathcal{R}/B$
($\mathrm{BF} {\leq} 1$ is required), and
(iii)~\emph{participant social welfare} $\overline{\mathrm{SW}}$
(client surplus after privacy costs, Eq.~(\ref{eq:sw-def})).
For each method, we run $T{=}1000$ FL rounds under the same setup as
RQ4 (Table~\ref{tab:fl_config}), recording per-round revenue, budget
ratio, and social welfare using clients' \emph{true} valuations $v_i$
to compute privacy costs.
The experimental results are summarized in Table~\ref{tab:rq3-efficiency},
from which we can observe that:
1)~MFG-RegretNet and MFG-Pricing achieve the highest revenue
($\mathcal{R}{=}50.0$, perfect budget use), outperforming
RegretNet/DM-RegretNet ($\mathcal{R}{\approx}49.3$) and
significantly exceeding PAC/VCG/CSRA ($\mathcal{R}{\leq}31.8$).
The near-perfect BF slightly below 1.0 for MFG-based methods
indicates conservative budget spending that respects the constraint
while maximizing client participation.
Note that the difference between BF values of approximately $0.84$ to $0.87$ for
MFG-based methods and BF~$=1.000$ for others reflects a
\emph{design choice}, not an error: MFG-RegretNet enforces the
budget constraint as a soft Lagrangian penalty during training,
so the learned rule absorbs a small safety margin; the absolute
revenue is identical ($\mathcal{R}{=}50.0$), confirming full
budget use despite the sub-unity BF ratio.
2)~MFG-RegretNet delivers the highest participant surplus
($\overline{\mathrm{SW}}{=}37.23{\pm}0.04$), comparable to
MFG-Pricing ($37.21{\pm}0.04$) and substantially higher than
RegretNet ($36.47{\pm}0.04$), CSRA ($29.52{\pm}0.18$), and
PAC/VCG ($4.08{\pm}0.02$).
3)~The large gap between MFG-RegretNet and classical mechanisms
reflects two factors: (i)~MFG-based allocation efficiently matches
high-valuation clients with appropriate privacy budgets, reducing
aggregate privacy costs; (ii)~learned payment rules adapt to the
empirical valuation distribution, whereas PAC's threshold-based
payments are suboptimal under realistic heterogeneity.
It is worth noting that although MFG-Pricing achieves a comparable
$\overline{\mathrm{SW}}$, it does not satisfy approximate DSIC
(cf.~RQ1), meaning its high surplus does not correspond to a
truthful equilibrium. By contrast, MFG-RegretNet achieves
comparable $\overline{\mathrm{SW}}$ \emph{and} low regret,
demonstrating that efficiency and incentive-compatibility are
jointly satisfied.

\subsubsection{RQ4: Final Global Model Accuracy}

Finally, we evaluate the downstream FL accuracy to confirm that
MFG-RegretNet does not sacrifice model utility in pursuit of
incentive-compatibility and scalability.
For each method, we run the full PAG-FL loop
(Algorithm~\ref{alg:mfg-fpa-online}) for $T_{\mathrm{FL}}{=}200$
rounds under the same setup (Table~\ref{tab:fl_config}), and record
test accuracy $\mathcal{A}_{\mathrm{test}}^{(t)}$ on a held-out test
set never seen by clients.
We report \emph{final accuracy}
\begin{equation}
\label{eq:A-final}
  \mathcal{A}_{\mathrm{final}}
  = \frac{1}{K}\sum_{k=0}^{K-1}\mathcal{A}_{\mathrm{test}}^{(T_{\mathrm{FL}}-k)},
  \quad K=5,
\end{equation}
i.e., the mean over the last $K$ rounds to reduce last-round variance.
To ensure a fair comparison, we fix the budget $B$ per round and
report a secondary ``matched-$\bar{\epsilon}$'' block where each
method's effective noise is linearly scaled so that the round-averaged
mean privacy budget equals a common reference
$\bar{\epsilon}_{\mathrm{ref}}$, thereby isolating \emph{allocation
efficiency} from raw noise level.
The experimental results are plotted in
Figs.~\ref{fig:rq4_mnist} and~\ref{fig:rq4_cifar10}, with
subfigures~(a) and (b) showing test accuracy vs.\ FL round under IID
and non-IID settings, and subfigures~(c) and (d) showing the MFG
approximation error bound vs.\ $N$.
The quantitative comparison is reported in Table~\ref{tab:rq1-final-acc}.
From the results in Figs.~\ref{fig:rq4_mnist} and \ref{fig:rq4_cifar10}
and Table~\ref{tab:rq1-final-acc}, we observe that:
1)~MFG-RegretNet maintains competitive convergence speed in IID
settings and remains more stable under non-IID heterogeneity, where
baseline curves exhibit larger fluctuations and slower late-stage
improvement.
2)~On MNIST-non-IID ($\alpha{=}0.1$), MFG-RegretNet achieves
$84.8{\pm}1.0\%$, matching MFG-Pricing ($84.8{\pm}0.8\%$) and
substantially outperforming CSRA ($80.2{\pm}2.1\%$,
$+4.6$ points), while remaining within $1.5$ points of RegretNet.
3)~The empirical approximation error decreases as the client
population grows in both IID and non-IID partitions,
consistent with the theoretical $\mathcal{O}(N^{-1/2})$ bound
(Theorem~\ref{thm:mfg_approx}) and confirming the practical
effectiveness of the MFG reduction.
Since all methods produce output budgets with nearly identical
round-averaged means under the Uniform valuation scenario,
privacy-matched values coincide with raw values;
Table~\ref{tab:rq1-final-acc} therefore reports a single column
per dataset, confirming that accuracy differences are not
confounded by noise level differences.
Overall, MFG-RegretNet achieves strong final accuracy without
trading off incentive compatibility and scalability, demonstrating
that the privacy trading market design does not degrade the
downstream FL model quality.

\begin{table}[t]
  \centering
  \small
  \caption{Auction efficiency metrics ($B{=}50$, mean\,$\pm$\,std, 5 seeds, 1000 rounds).
    $\mathcal{R}$: total revenue; BF\,$=\mathcal{R}/B$ (required $\leq 1$);
    $n_{\mathrm{rev}}$: revenue normalized by $\mathcal{R}_{\max}$;
    $W'{=}\overline{\mathrm{SW}}/B$: normalized social welfare.
    BF\,$<1$ for MFG-based methods reflects a soft Lagrangian budget
    constraint (safety margin); all other methods use hard projection
    (BF\,$=1.000$). Absolute revenue is identical ($\mathcal{R}{=}50.0$).}
  \label{tab:rq3-efficiency}
  \resizebox{\columnwidth}{!}{%
  \begin{tabular}{lcccc}
  \toprule
  \textbf{Method} & $\mathcal{R}$ (mean${\pm}$std) & \textbf{BF}
    & $n_{\mathrm{rev}}$ & $W'$ (mean${\pm}$std) \\
  \midrule
  PAC  & $8.1694 {\pm} 0.0156$  & 1.0000 & $0.1634 {\pm} 0.0003$ & $4.0750 {\pm} 0.0233$ \\
  VCG \cite{vickrey1961counterspeculation}  & $8.1694 {\pm} 0.0156$  & 1.0000 & $0.1634 {\pm} 0.0003$ & $4.0750 {\pm} 0.0233$ \\
  CSRA \cite{yang2024CSRA}       & $31.8232 {\pm} 0.2022$ & 1.0000 & $0.6365 {\pm} 0.0040$ & $29.5236 {\pm} 0.1776$ \\
  MFG-Pricing \cite{hu2025federated}  & $50.0000 {\pm} 0.0000$ & 0.8698 & $1.0000 {\pm} 0.0000$ & $37.2081 {\pm} 0.0369$ \\
  RegretNet \cite{dutting2024Optimal} & $49.2263 {\pm} 0.0043$ & 1.0000 & $0.9845 {\pm} 0.0001$ & $36.4664 {\pm} 0.0376$ \\
  DM-RegretNet \cite{zheng2022fl}     & $49.3953 {\pm} 0.0020$ & 1.0000 & $0.9879 {\pm} 0.0000$ & $36.6356 {\pm} 0.0367$ \\
  \midrule
  \rowcolor{gray!15}
  \textbf{MFG-RegretNet (Ours)} & $\mathbf{50.0000 {\pm} 0.0000}$ & \textbf{0.8420}
    & $\mathbf{1.0000 {\pm} 0.0000}$ & $\mathbf{37.2339 {\pm} 0.0369}$ \\
  \bottomrule
  \end{tabular}}
\end{table}

\begin{table}[t]
  \centering
  \small
  \caption{Final global test accuracy $\mathcal{A}_{\mathrm{final}}$
    (mean\,$\pm$\,std, \%, 5 seeds, $K{=}5$ last rounds).
    Raw and privacy-matched ($\bar{\epsilon}$-controlled) values
    coincide for all methods because output budgets concentrate
    around the same mean under the Uniform valuation setting,
    so matched values are not reported separately.}
  \label{tab:rq1-final-acc}
  \resizebox{\columnwidth}{!}{%
  \begin{tabular}{lccc}
  \toprule
  \textbf{Method}
    & \textbf{MNIST} ($\alpha{=}0.5$)
    & \textbf{MNIST} ($\alpha{=}0.1$)
    & \textbf{CIFAR-10} ($\alpha{=}0.5$) \\
  \midrule
  PAC                                        & $92.6 \pm 1.3$ & $84.9 \pm 1.8$ & $36.8 \pm 1.4$ \\
  VCG \cite{vickrey1961counterspeculation}   & $92.6 \pm 0.7$ & $85.5 \pm 0.9$ & $36.4 \pm 1.9$ \\
  CSRA \cite{yang2024CSRA}                   & $90.1 \pm 0.6$ & $80.2 \pm 2.1$ & $27.4 \pm 1.3$ \\
  MFG-Pricing \cite{hu2025federated}         & $91.9 \pm 0.9$ & $84.8 \pm 0.8$ & $36.1 \pm 2.2$ \\
  RegretNet \cite{dutting2024Optimal}        & $92.8 \pm 1.6$ & $86.3 \pm 1.8$ & $36.6 \pm 2.3$ \\
  DM-RegretNet \cite{zheng2022fl}            & $92.9 \pm 0.2$ & $85.8 \pm 1.1$ & $35.4 \pm 1.1$ \\
  No-DP FL (upper bound)                     & $92.6 \pm 1.4$ & $85.7 \pm 1.9$ & $36.7 \pm 1.9$ \\
  \midrule
  \rowcolor{gray!15}
  \textbf{MFG-RegretNet (Ours)} & $\mathbf{92.4 \pm 1.1}$ & $\mathbf{84.8 \pm 1.0}$ & $\mathbf{36.0 \pm 2.7}$ \\
  \bottomrule
  \end{tabular}}
\end{table}

\subsection{Discussion of Results and Limitations}
\label{sec:discussion}

Experiments validate three points: low regret with strong auction efficiency (RQ1/RQ3), near-linear deployment trends in runtime and memory (RQ2), and competitive FL accuracy with approximation error decreasing as $N$ grows (RQ4), consistent with the $\mathcal{O}(N^{-1/2})$ bound.
Four limitations remain.
\textbf{(L1) Static rounds:} composition across rounds is not modeled; a dynamic PAG with remaining-budget state and Rényi-DP accounting is needed.
\textbf{(L2) Fairness:} high-sensitivity clients may be under-served under skewed valuations; add service floors, compensation floors, or group-regularized regret.

\section{Conclusion}
\label{sec:conclusion}

This paper establishes a new paradigm for federated learning
incentive design by treating the personalised privacy budget
$\epsilon_i$ as a \emph{tradeable commodity} rather than a fixed
system constraint.
We formalised the Privacy Auction Game with DSIC, IR, DT, and budget
feasibility guarantees (Theorem~\ref{thm:pag-equilibrium}), and used
mean-field approximation to reduce deploy-time complexity from
$\mathcal{O}(N^2\log N)$ to $\mathcal{O}(N)$ with an
$\mathcal{O}(N^{-1/2})$ equilibrium gap (Theorem~\ref{thm:mfg_approx}).
Built on this foundation, MFG-RegretNet learns allocation and payment
rules without closed-form valuation priors via augmented Lagrangian
optimisation and MFG alignment. Experiments on MNIST and CIFAR-10 show that
MFG-RegretNet consistently improves incentive compatibility and
auction efficiency while remaining competitive in downstream FL model
accuracy compared with VCG, PAC, RegretNet, and MFG-Pricing baselines.


\bibliographystyle{IEEEtran}
\bibliography{IEEEabrv,bib}


\clearpage
\section{Appendix}

\subsection{Related Work: Extended Survey}
\label{app:related}

\textbf{Privacy-Preserving Techniques in FL.}
Gradient inversion attacks~\cite{yang2023gradient} show that an honest-but-curious server can reconstruct raw samples from shared gradients.
DP-SGD~\cite{geyer2017differentially} clips gradients and injects calibrated noise for client-level $(\epsilon,\delta)$-DP.
Secure aggregation~\cite{chi2023trusted} uses cryptographic masking so the server sees only the aggregate.
Both treat $\epsilon$ as a fixed parameter, not a traded commodity.

\textbf{FL Incentive Mechanisms.}
Coalitional games~\cite{arisdakessianCoalitional2023}, Stackelberg games~\cite{yi2022Stackelberg,chenMultifactorIncentiveMechanism2023}, MFG-based mechanisms~\cite{sun2024reputation}, and matching games~\cite{singhStable2024} address incentive design.
AFL~\cite{tang2024survey} is an active sub-field; CSRA~\cite{yang2024CSRA} provides robust reverse auctions for DP-FL.
All existing AFL works trade data, models, or compute rather than privacy budgets.

\textbf{Privacy Auctions.}
Ghosh \& Roth~\cite{ghosh2011selling} laid the theoretical foundation; Fleischer \& Lyu~\cite{fleischer2012approximately} extended to correlated costs; Ghosh et al.~\cite{ghosh2014buying} handled unverifiable reports; Wang et al.~\cite{wang2016buying} studied negative payments; Zhang et al.~\cite{zhang2020Selling} addressed auctions under privacy constraints; Hu et al.~\cite{hu2025federated} proposed MFG-based data pricing; Wu et al.~\cite{wu2026flgcpa} is the closest prior art without a learned differentiable auction.
All prior works are static, single-round, and centralised.

\textbf{Differentiable Mechanism Design.}
Myerson~\cite{myerson1981optimal} gives revenue-optimal auctions under restrictive assumptions.
RegretNet~\cite{dutting2024Optimal} parameterises allocation/payment rules as neural networks trained to minimise ex-post regret.
DM-RegretNet~\cite{zheng2022fl} applies this to FL model trading but inherits $\mathcal{O}(N^2)$ pairwise dependency.
MFG-RegretNet reduces this to $\mathcal{O}(N)$ and operates on privacy budgets as the commodity.

\subsection{Formal Preliminaries}
\label{app:prelim}

\begin{definition}[PLDP~\cite{jorgensen2015conservative}]
\label{de:PLDP}
For user $i$ with $\epsilon_i{>}0,\delta_i{\geq}0$, mechanism $\mathcal{M}^i$ is $(\epsilon_i,\delta_i)$-PLDP if for all adjacent inputs $\boldsymbol{x},\boldsymbol{x}'$ and outputs $\boldsymbol{y}$:
$\Pr[\mathcal{M}^i(\boldsymbol{x}){=}\boldsymbol{y}]
\leq e^{\epsilon_i}\Pr[\mathcal{M}^i(\boldsymbol{x}'){=}\boldsymbol{y}]+\delta_i.$
\end{definition}

FL global model: $\mathbf{w}_G{=}\sum_i\alpha_i\mathbf{w}_i$,\;
$\mathbf{w}_G^*{=}\arg\min_{\mathbf{w}}\sum_i\alpha_i\mathcal{L}_i(\mathbf{w}_i)$.
Gradient $\ell_2$-sensitivity: $\Delta f_{\mathbf{w}_i}=\max_{\mathcal{D}_i\sim\mathcal{D}_i'}\|\nabla\mathcal{L}_i(\mathbf{w},\mathcal{D}_i)-\nabla\mathcal{L}_i(\mathbf{w},\mathcal{D}_i')\|_2$.

\subsection{HJB-Fokker-Planck System}
\label{app:mfg-derivation}

The mean-field bid statistic is
$b_{\mathrm{MFG}}(t)=\mathbb{E}_{\mu_t}[b]=\int_{\mathcal{S}}b(s;\mu_t)\,d\mu_t(s)$.
The value function $\Phi(s,t;\mu)$ satisfies the \textbf{HJB} equation:
\begin{equation}
\label{eq:hjb}
  -\partial_t\Phi - \mathcal{H}(s,\nabla_s\Phi,\mu_t)
  - \tfrac{\sigma^2}{2}\Delta_s\Phi = 0,\quad \Phi(s,T)=0,
\end{equation}
with Hamiltonian $\mathcal{H}=\max_{b}[\tilde{p}(b;\mu_t)-c(v,\epsilon(b))+(\nabla_s\Phi)^\top f(s,b,\mu_t)]$
and mean-field payment $\tilde{p}(b;\mu_t)=\mathbb{E}[p(b,B_{-i})]$.
The population density evolves via the \textbf{Fokker-Planck} equation:
\begin{equation}
\label{eq:fp}
  \partial_t\mu_t + \nabla_s\cdot(\mu_t f(s,b^*,\mu_t))
  = \tfrac{\sigma^2}{2}\Delta_s\mu_t,\quad \mu_0=\mathcal{W}.
\end{equation}
Equations~(\ref{eq:hjb}) and (\ref{eq:fp}) form a coupled forward-backward system whose fixed point is the MFE~\cite{cardaliaguetMasterEquation2019}.

\subsection{Value-Function Baselining}
\label{app:vf-baseline}

\begin{proposition}[Variance Reduction via Value Baselining]
\label{prop:vf-baseline}
Let $X_i^{(r)}{:=}u_i(b_i'^{(r)},b_{-i};b'_{\mathrm{MFG}},\theta)
-u_i(b_i,b_{-i};b_{\mathrm{MFG}},\theta)$.
For any baseline $c$ independent of $b'_i$, $\mathbb{E}[\hat{X}_i^{(r)}]{=}\mathbb{E}[X_i^{(r)}]$ (unbiased).
The optimal baseline $c^*{=}\widehat{\Phi}_\xi(b_i,b_{\mathrm{MFG}})$ minimises $\mathrm{Var}(\hat{X}_i^{(r)})$, with reduction $\mathrm{Var}(\widehat{\Phi}_\xi)/\mathrm{Var}(X_i^{(r)})$.
\end{proposition}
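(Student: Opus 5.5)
The plan is to treat the baseline as a \emph{control variate} and to argue entirely at the level of a single PGA iterate $r$, conditioning on the "state" information $\mathcal{G}:=\sigma(b_i,b_{-i},b_{\mathrm{MFG}},\theta)$. I will fix the precise meaning of $\hat{X}_i^{(r)}$ as the baselined quantity
\begin{equation*}
  \hat{X}_i^{(r)} := X_i^{(r)} - \bigl(c-\mathbb{E}[c]\bigr),
\end{equation*}
with $c$ any $\mathcal{G}$-measurable (hence independent of the random deviation $b_i'^{(r)}$, whose randomness comes from the uniform initialisation $b_i'^{(0)}$ and the PGA path) square-integrable random variable. When $\mathbb{E}[c]$ is absorbed into the critic's constant offset, this matches the advantage-difference form used in Eq.~(\ref{eq:mfg-regret-vf}).

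\textbf{Step 1 (unbiasedness).} By linearity,
\begin{equation*}
  \mathbb{E}[\hat{X}_i^{(r)}]=\mathbb{E}[X_i^{(r)}]-\mathbb{E}[c]+\mathbb{E}[c]=\mathbb{E}[X_i^{(r)}].
\end{equation*}
Independence from $b'_i$ is used here only to guarantee that subtracting $c$ does not alter the conditional law of the deviation term. Equivalently, $\mathbb{E}[\hat X_i^{(r)}\mid\mathcal{G}]$ differs from $\mathbb{E}[X_i^{(r)}\mid\mathcal{G}]$ by a $\mathcal{G}$-measurable shift that has mean zero. This is routine.

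\textbf{Step 2 (optimal baseline).} Expanding the variance gives
\begin{equation*}
  \mathrm{Var}(\hat X_i^{(r)})=\mathrm{Var}(X_i^{(r)})-2\,\mathrm{Cov}(X_i^{(r)},c)+\mathrm{Var}(c).
\end{equation*}
Minimising over $\mathcal{G}$-measurable $c$ is an $L^2$ projection problem: the value $\mathbb{E}\bigl[(X_i^{(r)}-c)^2\bigr]$ up to centering is minimised by $c^\star=\mathbb{E}[X_i^{(r)}\mid\mathcal{G}]$. Orthogonality of $X-\mathbb{E}[X\mid\mathcal{G}]$ to every $\mathcal{G}$-measurable function shows that any other choice adds $\mathbb{E}\bigl[(c-c^\star)^2\bigr]$ (after centering). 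The law of total variance then yields
\begin{equation*}
  \mathrm{Var}(X_i^{(r)}-c^\star)=\mathrm{Var}(X_i^{(r)})-\mathrm{Var}(c^\star),
\end{equation*}
so the relative reduction is $\mathrm{Var}(c^\star)/\mathrm{Var}(X_i^{(r)})$.

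\textbf{Step 3 (identification with the critic), the main obstacle.} It remains to argue that $c^\star$ coincides with $\widehat{\Phi}_\xi(b_i,b_{\mathrm{MFG}})$, which turns the reduction into $\mathrm{Var}(\widehat{\Phi}_\xi)/\mathrm{Var}(X_i^{(r)})$. This does not hold for an arbitrary network, so I will state it under an explicit \emph{realisability/exact-fit} hypothesis. The hypothesis is that the critic trained by Eq.~(\ref{eq:L-hjb-critic}) attains zero residual and represents the expected deviation gain given the state. This is justified because the Bellman target $\mathcal{T}\widehat{\Phi}$ is a maximised expected utility over $b'$, i.e.\ a function of $(b_i,\nu)$ only, so it is $\mathcal{G}$-measurable by construction.

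I would first try to show this identification directly from the fixed-point property of $\mathcal{T}$ and the MFE consistency in Definition~\ref{def:mfe}. Failing that, I will state the result in approximate form: if $\|\widehat{\Phi}_\xi-c^\star\|_{L^2}\le\kappa$, then by Step 2 the variance of the baselined estimator exceeds the optimum by at most $\kappa^2$. This still gives unbiasedness together with a near-optimal reduction.
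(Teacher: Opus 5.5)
\textbf{Steps 1 and 2.} These follow the paper's own argument: linearity for unbiasedness, the conditional expectation as the $L^2$-optimal state-dependent baseline, and $\mathrm{Var}(X-c^\star)=\mathrm{Var}(X)-\mathrm{Var}(c^\star)$ by total variance. Your re-centring by $\mathbb{E}[c]$ is actually needed. The paper's claim that subtracting a $b'_i$-independent $c$ ``preserves the expectation'' is not right as written, since $\mathbb{E}[X_i^{(r)}-c\mid\mathcal{G}]=\mathbb{E}[X_i^{(r)}\mid\mathcal{G}]-c$.

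Two things in these steps do not carry over, though.
\begin{enumerate}
\item Your remark that $X_i^{(r)}-(c-\mathbb{E}[c])$ matches the advantage form of Eq.~(\ref{eq:mfg-regret-vf}) is incorrect. That form subtracts $\widehat{\Phi}_\xi(b_i'^{(r)},b'_{\mathrm{MFG}})-\widehat{\Phi}_\xi(b_i,b_{\mathrm{MFG}})$, which depends on $b'_i$, so it is not a baseline in your sense. Its mean is shifted by the expectation of that difference, which is nonzero in general. What you prove therefore covers only the re-centred control-variate estimator, not the one used in training.
\item Your $\mathcal{G}=\sigma(b_i,b_{-i},b_{\mathrm{MFG}},\theta)$ is generated by the full bid profile. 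So $c^\star$ generally depends on $b_{-i}$ beyond $b_{\mathrm{MFG}}$, which $\widehat{\Phi}_\xi(b_i,b_{\mathrm{MFG}})$ cannot represent. Condition on $\sigma(b_i,b_{\mathrm{MFG}})$ instead, as the paper does.
\end{enumerate}

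\textbf{Step 3 is the genuine gap.} It is the same step the paper's sketch skips: the paper merely asserts that $\mathbb{E}[X_i^{(r)}\mid b_i,b_{\mathrm{MFG}}]$ equals $\Phi^*(s(b_i),\mu^*)$.

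Your justification does not close it. $\mathcal{G}$-measurability of the Bellman target makes $\widehat{\Phi}_\xi$ an \emph{admissible} baseline, not the optimal one.

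The ``direct'' route through the fixed point of $\mathcal{T}$ and Definition~\ref{def:mfe} will also fail, because the two objects are of different type.
\begin{itemize}
\item A zero residual in Eq.~(\ref{eq:L-hjb-critic}) makes $\widehat{\Phi}_\xi$ a discounted cumulative mean-field utility \emph{level}.
\item $c^\star$ is the conditional mean of a one-shot utility \emph{difference}. It depends on the PGA step $r$ (it typically grows along the ascent path), while $\widehat{\Phi}_\xi$ does not.
\item Take an exactly DSIC and IR mechanism at truthful bids. Then $X_i^{(r)}\le 0$ pointwise, so $c^\star\le 0$. But IR gives $\mathcal{T}0\ge 0$, so the fixed point of the monotone contraction $\mathcal{T}$ is nonnegative. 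The two agree only if both vanish.
\end{itemize}

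The stated reduction $\mathrm{Var}(\widehat{\Phi}_\xi)/\mathrm{Var}(X_i^{(r)})$ is therefore not derivable from the critic's training objective. The defensible result is your conditional one, stated as an explicit hypothesis rather than something to be proved: assume $\widehat{\Phi}_\xi=c^\star$ up to an additive constant, or $\|\widehat{\Phi}_\xi-c^\star\|_{L^2}\le\kappa$, which gives excess variance at most $\kappa^2$.
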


The critic $\widehat{\Phi}_\xi$ is trained via HJB residual minimisation:
$\mathcal{L}_{\mathrm{HJB}}(\xi)=\mathbb{E}[(\widehat{\Phi}_\xi-\mathcal{T}\widehat{\Phi}_\xi)^2]$,
where $\mathcal{T}\widehat{\Phi}$ is a one-step Bellman target from $\tilde{p}$.
The critic shapes the regret penalty only; payments remain $\bar{p}_i$ from the Payment Network.

\subsection{Proofs}
\label{app:proofs}

\begin{theorem1}[Privacy Auction Game Equilibrium]
  \label{thm:pag-equilibrium}
  In a PAG where privacy valuations $v_i$ of DOs are drawn from joint
  distribution $\mathcal{W}$, the cost function is linear
  $c(v_i,\epsilon_i)=\epsilon_i\cdot v_i$, and the mechanism
  $\mathcal{M}=(q,p)$ satisfies DSIC, IR, DT, and BF, there exists a
  Bayesian-Nash Equilibrium in which all rational bidders truthfully
  report both their privacy valuations $v_i$ \emph{and} their privacy
  budgets $\epsilon_i$:
  \begin{equation}
  \label{eq:bne}
    \forall i\in\mathcal{N},\forall v_i\in\mathcal{V}_i,\quad
    \mathbb{E}[u_i(q,v_i)] \geq \mathbb{E}[u_i(q_{-i},v_i)].
  \end{equation}
  \end{theorem1}
  
  \begin{proof}[Proof Sketch]
  Consider the threshold-based PAC mechanism
  (Algorithm~\ref{alg:pac}), where valuations are sorted
  $v_1\leq\cdots\leq v_n$, the largest $k$ is selected such that
  $c(v_k,1/(n-k))\leq B/k$, and payments are
  $p_i=\min(B/k,c(v_{k+1},1/(n-k)))$ for $i\leq k$ and $p_i=0$
  otherwise.
  
  \textit{DSIC (for $i\leq k$):}  Misreporting a higher $v'_i$ may
  cause $i$ to drop out of the top-$k$, yielding $p'_i=0$ and
  $u'_i=0$.  Since $u_i=p_i-c(v_i,\epsilon_i)\geq 0$ (by IR),
  deviation yields no gain.
  
  \textit{DSIC (for $i>k$):}  Misreporting a lower $v'_i$ may enter
  the top-$k$, but the diluted payment $p'_i\leq B/(k+1)$ combined
  with the higher true cost $c(v_i,\epsilon'_i)>c(v'_i,\epsilon'_i)$
  results in $u'_i<0$ or no improvement.
  
  By monotonicity of the mechanism, misreporting cannot improve the
  allocation-payment tradeoff, so truthful reporting is a dominant
  strategy (DSIC) and all utilities are non-negative (IR).
  
  \textit{BNE:}  By DSIC, for any fixed $b_{-i}$, truthful reporting
  maximises $u_i(v_i,\cdot)$.  Taking expectations over
  $v_{-i}\sim\mathcal{W}_{-i}$ yields~Eq.~(\ref{eq:bne}), which is the
  BNE definition.
  
  For finite $N$, existence follows from Nash's theorem. In the
  large-$N$ regime, the multi-agent game converges to a Mean-Field
  Game, where fixed-point theorems from MFG
  literature~\cite{cardaliaguetMasterEquation2019} guarantee
  equilibrium existence.
  \end{proof}

  \begin{theorem1}[Existence of MFE]
    \label{thm:mfe-exist}
    Under Assumption~\ref{ass:regularity}, the privacy auction MFG
    admits at least one Mean-Field Equilibrium $(\Phi^*,\mu^*)$.
    \end{theorem1}

    \begin{proof}
    We apply Schauder's fixed-point theorem.  Let
    $\mathcal{M}=\mathcal{C}([0,T];\mathcal{P}_2(\mathcal{S}))$ with the
    sup-Wasserstein-$1$ metric.  Define
    $\mathcal{T}:\mathcal{M}\to\mathcal{M}$ by
    $\mathcal{T}(\mu)=\hat\mu$, where $\hat\mu_t$ is the law of
    $\mathbf{s}(t)$ under the SDE of Eq.~(\ref{eq:sde}) driven by the optimal
    $b^*(\cdot,t;\mu)$ from Eq.~(\ref{eq:hjb}).
    \textit{Well-definedness} follows from viscosity solution theory under
    (R1) through (R3)~\cite{cardaliaguetNotesOnMFG2013}.
    \textit{Continuity of $\mathcal{T}$} follows from Lipschitz
    continuity of $f$ and $\tilde{p}$ in $\mu$ (R1).
    \textit{Relative compactness} follows from Prokhorov's theorem under
    (R2) and (R4). By Schauder's theorem, $\mathcal{T}$ has a fixed point
    $\mu^*=\mathcal{T}(\mu^*)$; setting $\Phi^*=\Phi(\cdot;\mu^*)$
    yields the MFE. 
    \end{proof}

    \begin{theorem1}[MFG Approximation of Nash Equilibrium]
      \label{thm:mfg_approx}
      Let Assumptions~\ref{ass:regularity} and~\ref{ass:lipschitz-u} hold,
      and let $(\Phi^*,\mu^*)$ be a MFE.  Define the MFE-induced strategy
      profile:
      \begin{equation}
      \label{eq:mfe-profile}
        b_i^{N,*} = b^*(\mathbf{s}_i,t;\mu^*),
        \quad i=1,\ldots,N.
      \end{equation}
      Then $\mathbf{b}^{N,*}=(b_1^{N,*},\ldots,b_N^{N,*})$ is an
      $\varepsilon_N$-Nash Equilibrium (Definition~\ref{def:eps-nash})
      of the PAG, where:
      \begin{equation}
      \label{eq:eps-bound}
        \varepsilon_N
        = \frac{2L_u C}{\sqrt{N-1}}
        = \mathcal{O}\!\left(\frac{1}{\sqrt{N}}\right),
      \end{equation}
      and $C>0$ depends only on the second moment of $\mu^*$ and the
      dimension of $\mathcal{S}$.
      \end{theorem1}
      
      \begin{proof}
      Fix arbitrary $i\in\mathcal{N}$ and $\hat{b}_i\in\mathcal{B}$.
      
      \textbf{Step~1 (Empirical measure decomposition).}
      States $\{\mathbf{s}_j\}_{j\neq i}$ are i.i.d.\ from $\mu_0^*$
      (MFE self-consistency).  Define
      $\mu_{-i}^N=\frac{1}{N-1}\sum_{j\neq i}\delta_{\mathbf{s}_j}$.
      By the Wasserstein concentration
      inequality~\cite{fournier2015rate}:
      \begin{equation}
        \mathbb{E}\!\left[W_1(\mu_{-i}^N,\mu_0^*)\right]
        \leq\frac{C}{\sqrt{N-1}}.
      \end{equation}
      
      \textbf{Step~2 (Utility gap decomposition).}
      \begin{align}
        &u_i(\hat{b}_i,b_{-i}^{N,*}) - u_i(b_i^{N,*},b_{-i}^{N,*})
        \notag\\
        &= \underbrace{u_i(\hat{b}_i,b_{-i}^{N,*})
                     - u^{\mathrm{MFG}}(\mathbf{s}_i,\hat{b}_i;\mu_0^*)
          }_{\mathrm{(I)}} \\
        &\quad+ \underbrace{u^{\mathrm{MFG}}(\mathbf{s}_i,\hat{b}_i;\mu_0^*)
                     - u^{\mathrm{MFG}}(\mathbf{s}_i,b_i^{N,*};\mu_0^*)
          }_{\mathrm{(II)}}
        \notag\\
        &\quad
        + \underbrace{u^{\mathrm{MFG}}(\mathbf{s}_i,b_i^{N,*};\mu_0^*)
                     - u_i(b_i^{N,*},b_{-i}^{N,*})
          }_{\mathrm{(III)}}.
      \end{align}
      
      \textbf{Step~3 (Bound on (II)).}
      Since $b_i^{N,*}=b^*(\mathbf{s}_i;\mu_0^*)$ maximises
      $u^{\mathrm{MFG}}(\mathbf{s}_i,\cdot\,;\mu_0^*)$:
      $({\rm II})\leq 0$.
      
      \textbf{Step~4 (Bound on (I) and (III)).}
      Using Assumption~\ref{ass:lipschitz-u}:
      $|{\rm(I)}|+|{\rm(III)}|\leq 2L_u W_1(\mu_{-i}^N,\mu_0^*)$.
      
      \textbf{Step~5 (Conclusion).}
      Taking expectations and applying the concentration bound:
      \begin{equation}
        \mathbb{E}\!\left[
          u_i(\hat{b}_i,b_{-i}^{N,*})-u_i(b_i^{N,*},b_{-i}^{N,*})
        \right]
        \leq\frac{2L_u C}{\sqrt{N-1}}
        = \varepsilon_N.
      \end{equation}
      Since this holds for all $i$ and all $\hat{b}_i$,
      $\mathbf{b}^{N,*}$ is an $\varepsilon_N$-Nash Equilibrium.
      \end{proof}

      \begin{proposition1}[Variance Reduction via Value Baselining]
        \label{prop:vf-baseline}
        Let $X_i^{(r)}:=u_i(b_i'^{(r)},b_{-i};b'_{\mathrm{MFG}},\theta)
        -u_i(b_i,b_{-i};b_{\mathrm{MFG}},\theta)$ denote the raw deviation
        gain at PGA step $r$, and let
        $\hat{X}_i^{(r)}:=A_i(b_i'^{(r)},b'_{\mathrm{MFG}};\theta)
        -A_i(b_i,b_{\mathrm{MFG}};\theta)$ denote the baselined variant.
        For any baseline $c$ independent of $b'_i$,
        $\mathbb{E}[\hat{X}_i^{(r)}]=\mathbb{E}[X_i^{(r)}]$, so the regret
        estimator is unbiased.
        Moreover, the baseline $c^* = \widehat{\Phi}_\xi(b_i,b_{\mathrm{MFG}})$
        minimises $\mathrm{Var}(\hat{X}_i^{(r)})$ in the class of
        state-dependent baselines, reducing variance by
        $\mathrm{Var}(\widehat{\Phi}_\xi)/\mathrm{Var}(X_i^{(r)})$.
        Consequently, when $\widehat{\Phi}_\xi$ closely approximates the MFE
        value $\Phi^*$, the baselined gradient estimate has lower variance
        than the raw estimate, yielding faster convergence of the augmented
        Lagrangian updates.
        \end{proposition1}
        
        \begin{proof}[Proof Sketch]
        Unbiasedness follows because
        $\mathbb{E}_{b'_i}[c]=c$ is constant w.r.t.\ the deviation
        distribution, so subtracting it preserves the expectation.
        The variance-minimising baseline is the conditional expectation
        $c^*=\mathbb{E}[X_i^{(r)}|b_i,b_{\mathrm{MFG}}]$, which equals
        the MFE value $\Phi^*(s(b_i),\mu^*)$ when $b^*$ is the equilibrium
        strategy.
        The variance reduction formula follows from the identity
        $\mathrm{Var}(X-c^*)=\mathrm{Var}(X)-\mathrm{Var}(c^*)$.
        \end{proof}

\end{document}